\newtheorem{proposition}{Proposition}
\newtheorem{lemma}{Lemma}
\newtheorem{definition}{Definition}
\begin{document}
\title{The Illusion of Collusion}
\author{Connor Douglas, Foster Provost, Arun Sundararajan\\
\textit{NYU Stern School of Business\footnote{cpd8405@stern.nyu.edu, fprovost@stern.nyu.edu, digitalarun@nyu.edu. We thank seminar participants at New York University, the 2024 International Conference on Information Systems and the EC'25 Workshop on Online Learning and Economics for feedback on earlier versions of this work.}}}
\date{February 2026}
\maketitle
\abstract{
\begin{spacing}{1.15} 
Algorithmic agents are used in a variety of competitive decision-making settings, including pricing contexts that range from online retail to residential home rental. We study the emergence of algorithmic collusion when competing agents employ multi-armed bandit algorithms and competition is modeled as a repeated Prisoner's Dilemma game. Notably, agents in our setting perform online learning with no prior model of game structure and have no direct knowledge of competitor states or actions---thus they cannot learn strategies that depend on these factors. These context-free bandits nonetheless frequently learn seemingly collusive behavior---a phenomenon we term \textit{naive collusion}. Our results reveal that whether naive collusion emerges depends starkly on the choice of behavior policy employed by bandit learners. The mechanism underpinning the emergence of collusive outcomes is \textit{synchronicity} in agent action plays, where synchronicity captures how often agents play the same action. We show that in the long-run, naive algorithmic collusion \textit{never} emerges when both agents use a broad class of persistently random algorithms, including the epsilon-greedy algorithm without epsilon decay, \textit{sometimes} emerges when both agents use greedy-in-the-limit algorithms which feature randomness during exploration but are asymptotically deterministic, and \textit{always} emerges when both agents use deterministic bandit learning algorithms like those in the well-known upper confidence bound (UCB) family. We highlight market and algorithmic conditions under which one can and cannot predict \textit{a priori} whether collusion will occur. Our findings have several policy implications: preventing pricing algorithms from conditioning their actions on competitor prices may not preclude algorithmic collusion, symmetry in algorithms may increase collusion potential, and the emergence of algorithmic collusion is path dependent. 
\end{spacing}
}%

\section{Introduction}

Artificial intelligence (AI) agents that make autonomous, data-driven decisions are now widespread. For example, AI agents are often used for pricing, in contexts ranging from setting Amazon product prices \citep{chen2016amazon} to determining residential real estate rental rates \citep{bortolotti2023algorithmic}. These agents can behave in a manner that suggests they are learning to collude in ways that may be perceived as anticompetitive and welfare-reducing \citep{4220818}.  For example, in a setting where sellers compete by setting prices, agents of competing sellers may choose to set the same high price.  

The possibility of such collusive behavior has led to regulatory scrutiny of AI pricing agents globally~\citep{harrington2018developing, Bichler2025Algorithmic}. In the United States, establishing that seemingly collusive behavior violates antitrust law often requires some evidence of intentional coordination or an ``exchange of wills,'' and thus, algorithmic collusion arising from independent optimization by competing agents challenges current enforcement strategies. Indeed, even by auditing an agent's data and learning algorithm, it may not be possible to determine if the associated algorithm has learned to price supracompetitively \citep{hartline25}.  Granted, prior academic research has established that when competing sellers independently use machine learning algorithms, supracompetitive prices can nevertheless emerge, but the economic and informational characteristics of settings that lead to algorithms converging on collusive outcomes remain unclear. These questions become particularly interesting when agents themselves are unaware of the competitive characteristics of their setting---or even unaware that they are competing at all. For firms using pricing algorithms without the capacity for rigorous market analysis, as with small, third-party online sellers, pricing decisions may very well be made without considering the strategic response of opponents. 



This paper's central question is: when will competing algorithms converge to (seemingly) collusive outcomes, absent any information whatsoever about the strategic interaction in which they are engaged?  We call this \textit{naive algorithmic collusion}. We investigate this question using the Prisoner’s Dilemma as our setting for strategic interaction. We study the repeated play of this game between agents who use textbook \textit{bandit learning algorithms}, which do not take into account the choices and outcomes of their competitors (e.g., competing prices) but instead learn and act based solely on their own individual action and payoff histories. 

Multi-armed bandit (or simply \emph{bandit}) learning approaches are especially well-suited to answer our question because these are the canonical methods for solving \textit{online learning} problems. These decision-making problems involve choosing from a set of actions whose outcomes are uncertain. The choice of action generates another sample to learn about the action's reward distribution, but the choice incurs some penalty or reward. Exploration and exploitation of information must therefore be balanced. These algorithms often directly update their estimates of the value of each available action from each reward signal received, and then incorporate these new value estimates into the subsequent choice of action. Bandit algorithms are used frequently in practice because their ``textbook'' code is interpretable and widely available, because they need minimal parameterization, and because they are remarkably effective despite their simplicity \citep{Sutton1998}. 

Our central finding is that the emergence of naive algorithmic collusion is quite common but depends starkly on the degree of randomness in the learning policies incorporated by the algorithms.  Naive algorithmic collusion \textit{always} emerges when competing symmetric agents both use a deterministic bandit learning algorithm, but \textit{never} emerges in the long-run when both agents use a broad class of non-deterministic bandit learning algorithms, including the widely used epsilon-greedy algorithm without epsilon decay. 

Since competing agents in practice may not be symmetric, we provide both analytical and experimental results that explore asymmetries across these algorithms.  For example, we find that introducing a ``small'' amount of asymmetry or randomness into an otherwise deterministic algorithm---for example, asymmetry or randomness in the tie-breaking rule of the upper-confidence bound (UCB) bandit learning algorithm, another textbook model---may not suffice to prevent the persistent emergence of naive algorithmic collusion. 

Beyond establishing the importance of randomness in the agent's algorithm in shaping the emergence of seemingly collusive pricing, our results also highlight that, rather than depending exclusively on either symmetry or observability of competitor actions, a different dimension of the realized sequence of outcomes that we call \textit{synchronicity} in actions---how often the other agent matches an agent’s chosen action in the same round, conditional on the agent taking the given action---shapes whether what is perceived as collusion emerges. Put differently, the emergence of seemingly collusive behavior can be path-dependent rather than being determined entirely by the nature of the learning algorithm or its symmetric use across agents. 

Importantly, all of these results are obtained in the (naive) setting wherein neither agent has any information about the game, their competitor's actions, or their competitor's outcomes.  Naive algorithmic collusion is thus not predicated on the discovery by the algorithms of any sort of complex repeated-game strategy like those frequently underpinning ``folk theorem” results that sustain collusion via mutual awareness of the threat of what is often an elaborate scheme of future punishments for deviation \citep{fudenbergmaskin1986}.

Our results can inform a growing number of lawsuits suggesting that collusion may be occurring as a result of coordinated use across firms of the same pricing algorithms \citep{doj_realpage_2024}. Courts assessing whether observed seemingly coordinated outcomes across firms are anticompetitive sometimes look for additional circumstantial evidence of coordinated action called ``plus factors'' \citep[p.~396]{Kovacic2011PlusFactors}.  As explained by former FTC commissioner William Kovacic and his coauthors, ``Central to the operation of laws that aggressively punish collusion are the definition and proof of concerted action. Powerful consequences flow from whether price increases observed in the marketplace emerge from individual or collective initiative'' \citep[p.~394]{Kovacic2011PlusFactors}. Many of the “plus factors” used to determine whether observed parallel conduct results from an agreement—such as actions contrary to an agent’s self-interest, behavior that cannot be explained as rational absent concerted action, and evidence of efforts to establish regular communication—are explicitly and notably absent from our models.    
Thus, the emergence nevertheless of seemingly collusive behavior in our setting---wherein competing agents have no direct knowledge of each other's existence or choices and all of their actions clearly flow from ``individual initiative"---provides new insight into how seemingly collusive outcomes may be purely a consequence of specific features of the learning algorithms being employed by independent online learning agents, or due to the chance realization of synchronous choices made by independent actors. 

The rest of this paper is organized as follows. In Section 2, we introduce related work. In Section 3, we define the setting (the iterated Prisoner's Dilemma) and introduce the analytical lens we will use to study these learning processes. In Section 4, we highlight a broad class of \textit{persistently random} algorithms that will not collude in the limit. Importantly, however, we highlight that the use of these algorithms is sub-optimal from the firm's perspective. In Section 5, we analyze a class of \textit{greedy-in-the-limit} algorithms whose use leads to non-zero probabilities of seemingly collusive behavior. Interestingly, these algorithms are asymptotically optimal, unlike those presented in Section 4. In Section 6, we analyze a class of deterministic algorithms which will \textit{certainly} collude, with algorithms in this class also being asymptotically optimal in the single-agent setting. We conclude in Section 7 with a discussion of the results, draw out their policy implications, and summarize some robustness checks for other settings and situations that cannot be modeled analytically. 

\section{Related Work}

The literature on algorithmic pricing collusion is burgeoning, spurred by uptake in pricing algorithms \citep{chen2016amazon}. 
Much of this recent work focuses on \textit{tacit} collusion by algorithms, which arises via online, adaptive experimentation by multiple pricing agents operating in the same market \citep{Bichler2025Algorithmic}. Given the demonstrated efficacy of adaptive pricing algorithms, understanding the mechanisms by which seemingly collusive outcomes can arise from their use is important \citep{misra2019dynamic}. 

Such outcomes have been documented empirically. For example, \citet{learning_to_coordinate} demonstrate general learned collusive behavior in the gasoline market. 
While this study is not expressly algorithmic, the findings highlight the tangible existence of learning processes leading to ostensibly coordinated outcomes absent characteristics typical of explicit collusive schemes. \citet{assad2020algorithmic} study a similar setting, examining the effect of prices on German gasoline markets when algorithmic pricing systems are adopted. The authors find that in duopoly markets, prices increase when both firms employ pricing algorithms. \citet{chen2016amazon} scrape pricing data from Amazon Marketplace, where third party sellers are making their own pricing decisions in a competitive market. The authors examine correlations in time series to show that many sellers are using dynamic pricing algorithms and discuss potential market distortions that could emerge as a consequence. This choice by sellers is consistent with the literature on dynamic pricing, which shows that adaptive bandit-like algorithms can lead to substantial increases in profit \citep{misra2019dynamic}.

Several other papers have studied the tacit collusion potential of adaptive learning algorithms in a variety of settings. In a widely cited paper from this literature, \cite{calvano2020artificial} provide evidence of pricing agents learning to collude in a simulated online market. 
Specifically, the authors focus on a standard, time-discounted Q-learning model that prices based on the last \textit{k} rounds of play. Using a simple model of price competition with differentiated products and logit demand, the authors experimentally demonstrate that competing algorithmic agents can learn a reward-punishment strategy consistent with those suggested by ``folk theorems” from game theory \citep{fudenbergmaskin1986}.  

 A recent survey of economic and legal challenges associated with algorithmic pricing is presented by \citet{gautier2020ai}, who describe the broad concept of tacit algorithmic collusion. The authors focus on the fragility of past work suggesting the potential for algorithmic pricing collusion and review more recent work on learned price-fixing schemes. Likewise, \citet{veljanovski2022pricing} takes a critical view of this literature, suggesting that algorithmic collusion is not of legitimate antitrust concern and that existing EU laws are sufficiently adaptable to cover cases that may arise. \citet{wang_algorithms_2023} highlight that algorithmic pricing agents may lower rather than raise revenue when there are competitors following rule-based pricing in the market. Another skeptical stance is presented by \citet{Kang2022Raising}, who highlight that learning agents vary in their collusive potential according to symmetries in information and algorithm construction.

\citet{miklos-thal_ai_2024} take a more balanced stance, while also noting that specific types of algorithms are more likely to display collusive behavior. Our work further underscores a need to understand the nuances of these algorithms in use; for example, we show that even within a certain class of learning algorithm, the implementation details can affect whether seemingly collusive or competitive outcomes are reached. We deepen theoretical understanding of algorithmic collusion by providing analytical results that supplement what has been a largely empirical or simulation-based literature, while seeding a unified framework for thinking about these learning processes and showing that certain algorithms will certainly \textit{not }learn to collude. 

As a part of their broader analysis of algorithmic pricing in multi-agent settings, \citet{brown2023competition} also examine the possibility for collusion to arise and their key recommendation to policymakers is to prohibit firms from using pricing algorithms that condition their prices on those of their competitors. Such conditioning is also the mechanism by which \citet{calvano2019} argue that supracompetitive prices are sustained. However, we find that seemingly collusive behavior can occur even when algorithms do not condition their choices on competitor prices, which raises new questions about policy recommendations of this form. 

Other studies have examined learning pricing behavior without explicit conditioning on opponent prices. 
For example, \citet{banchio2023a} study reinforcement learning agents that play a repeated Prisoner’s Dilemma with agents learning without conditioning on opponent prices. Their key finding is that collusion can arise via an endogenous linkage in algorithmic behavior they term ``spontaneous coupling.” Studying the same class of algorithm, \citet{waltman2008} observe cooperative behavior among Q-learning agents leveraging the Bellman equation in their update rules while engaged in a repeated Cournot game \citep{waltman2008}. \citet{Dolgopolov} characterizes a generalization of a stateless Q-learning agent and analytically describes the emergent outcomes of these algorithms playing a repeated Prisoner’s Dilemma.

In contrast with prior work, we focus instead specifically on bandit learning agents.  Bandit algorithms are widely taught as the fundamental paradigm for decision making under uncertainty. Unlike Q-learning, the field of bandit learning focuses expressly on action sampling strategies for optimal choice, whether that be fast convergence to the best action or maximization of long-term rewards \citep{Sutton1998}. Bandit learning is used in a multitude of settings, including pricing \citep{trovo2015multi}, and dynamic pricing is a key application \citep{bouneffouf_2020}. 

Our work is the most similar in spirit to \citet{hansen21}, which models competing multi-armed bandits playing a repeated pricing game. They explore coordinated behavior in this game by UCB algorithms. Specifically, they focus their analysis on a version of the UCB algorithm tuned to their market setting. Nonetheless, they do show the emergence of naive collusion in their setting and show that this result is sensitive to noise in reward realizations. We broaden this study to a general repeated Prisoner's Dilemma game and also examine a range of classes of bandit learning algorithms. We also generalize a result of \citet{hansen21}, showing that \textit{all} deterministic bandit algorithms will learn to collude. On the other hand, we show that this emergence of seemingly collusive behavior by bandit agents is not universal across bandit algorithms but algorithm-specific: as we describe later in the paper, it depends on the degree of randomness in the bandit algorithm's learning policy, and further, may be path-dependent.

\section{Setting}
\subsection{The Prisoner's Dilemma}
We model competition between two firms using a repeated Prisoner's Dilemma, a well-known game featuring tradeoffs between cooperation and conflict. The Prisoner's Dilemma has been used to model a wide variety of situations ranging from arms races to price competition~\citep{axelrod1980effective}. In the game's canonical formulation, there are two players ($n=2$): \textit{Player 1} and \textit{Player 2}.\footnote{A generalization to $n$ players is discussed later.} Each player has two possible actions: the cooperative or \emph{collusive} action $H$ (think of this as setting the high price) and the selfish or \emph{competitive} action $L$ (setting the low price). Denote Player $i$'s action as $a_i$, where $a \in A = \{H,L\}$. 

The payoffs of the Prisoner's Dilemma game are typically characterized by four parameters. Without loss of generality and to keep notation simple, we normalize two of these parameters to 0 and 1. The resulting two parameters define the relative payoffs from each of the four outcomes. We refer to these parameters as $\beta$ and $\gamma$. The resulting payoff matrix is illustrated in Table \ref{tab:reduced_payoff}.

\begin{table}[h!]
    \centering
    \setlength{\extrarowheight}{2pt}
    \begin{tabular}{cc|c|c|}
      & \multicolumn{1}{c}{} & \multicolumn{2}{c}{Player 2}\\
      & \multicolumn{1}{c}{} & \multicolumn{1}{c}{$H$}  & \multicolumn{1}{c}{$L$} \\\cline{3-4}
      \multirow{2}*{Player 1}  & $H$ & $(\beta,\beta)$ & $(0,1)$ \\\cline{3-4}
      & $L$ & $(1,0)$ & $(\gamma,\gamma)$ \\\cline{3-4}
    \end{tabular}
    \caption{Payoffs in the Prisoner's Dilemma, with $1>\beta>\gamma>0$.}
    \label{tab:reduced_payoff}
\end{table}

It is well known that playing  $L$ is a dominant strategy in a one-shot Prisoner's Dilemma, and, thus, the unique Nash equilibrium is $(L,L)$.  However, both players can achieve strictly higher payoffs if the actions chosen are $(H,H)$, the collusive outcome. We represent an outcome as the vector $o$, where $o = (a_1 , a_2)$ with $o \in O =\{H,L\}^2$. Each outcome maps to a reward vector $r \in [0,1]^2$, according to the payoff matrix in Table \ref{tab:reduced_payoff}.\footnote{Often, $\beta$ is constrained to being greater than $\frac{1}{2}$ to ensure that alternating between defecting (playing $L$ while the opponent plays $H$) and being defected on does not result in a higher long-term payoff than sustained cooperation. We opt to model the full space of the Prisoner's Dilemma rather than imposing this constraint. While this may lead to alternating behavior in context-aware agents, this strategy is not directly learnable in naive, context-free agents.}


In what follows, we model players using bandit learning algorithms to play an iterated (repeated) Prisoner's Dilemma game, and thus, outcome and reward vectors are indexed by period $t$.  Our analytical results model an infinitely repeated Prisoner's Dilemma.  Our simulations model a finitely repeated Prisoner's Dilemma. We present the mapping of outcomes to reward vectors in Table \ref{tab:state_index}. 

Notably, in our model, \emph{players are unaware of this game structure or even the existence of a game}.  Their decision making (and learning) only takes into account the payoffs associated with their own actions. Before describing this interplay in more depth, we first briefly describe aspects of bandit learning relevant to what follows. 

\subsection{Bandit Learning}
The problem of the multi-armed bandit is a quintessential paradigm within the reinforcement learning literature. The goal of a bandit (which, for ease of exposition, we equate with the agent using the bandit algorithm) is typically to maximize its long-run utility in some multi-round setting given a set of actions $A$, while having no \textit{a priori} model of the environment in which it acts. Agents use a learning algorithm that balances exploring their action space---so as to learn the reward distribution associated with each action---with exploiting their current knowledge of these distributions. 

Bandit learning algorithms are typically implemented to minimize a measure of \textit{regret}, which in its standard definition is the long-run difference between the reward realized from an agent's actions and the reward from the actions that would have maximized expected utility. The process of action selection often follows \textit{action-value} methods, which are described formally below.

We model bandit learners playing the repeated Prisoner's Dilemma game while being unaware of the existence of the opponent, its choices, its outcomes, or any other aspect of the strategic nature of the interaction they are engaged in.  In each period $t$, an agent chooses an action $a_{t} \in A$.  After an agent takes an action, the environment supplies some reward value for this action. Specifically, the action-play vector $\alpha_{a}$ is a time-indexed indicator vector for a fixed action $a$ defined as $\alpha_{a,t} = \mathds{1}\{a_{t} = a\}$. For each player, there is one such vector for each action. The reward vector $\rho$ records the corresponding reward (or payoff) information, where $\rho_{t}$ is the reward at time $t$. For instance, $\alpha_H$ would be the indicator vector with components set to $1$ if $H$ was played in the round indexed by $t$, and $0$ otherwise. The dot-product of this action-play vector and reward vector leads to the cumulative sum of rewards for an action, which is useful for compactly computing value estimates.

Let us introduce the set $\mathcal{H}$, the \textit{history} for a bandit learner. The history stores self-action plays and rewards as
\[\mathcal{H} = \{\alpha_{a} | \; \forall \; a \in A\} \cup \{\rho\}.\]
Given a history, the bandit agent's estimate of the expected value of taking each action is called the value estimate $V$ and is defined as the empirical mean of rewards associated with playing the action (shown via the dot product of relevant vectors):
\[V_a(\mathcal{H}) = \begin{cases}\frac{\alpha_a\rho}{\alpha_a \vec{1}} & \text{ if } \alpha_a\vec{1} \neq 0 \\
0 & \text{otherwise,} \end{cases}\]
 where $\vec{1} = (1,1, ..., 1)$ of length $t$. 

Agents maintain and update two policies. In each time period, the agent estimates a \emph{target policy}, $\pi^*$, which specifies the action that has the highest average reward given the history.\footnote{We study \textit{context-free} agents in this model; for contextual bandits, this value is stored for each context state. In the case of multiple maximal-valued actions, various tie-breaking procedures may be used.} Formally, $\pi^*$ is defined for each timestep $t$ as
\[\pi^{*}_t(a) = \begin{cases} 1 &\text{if} \; a = \text{argmax}_{a \in A} \; V_a(\mathcal{H}_t)\\
0 & \text{otherwise}
\end{cases}.\]
For simplicity, we overload the notation of $\pi_t^*$ to denote the greedy \textit{action} at time $t$ (as opposed to the optimal distribution across actions). 

Theoretically, playing the target policy corresponds to strict exploitation of the information gathered by an agent so far. However, outside of ``commitment'' phases of certain algorithms following deliberate exploration, such a greedy policy is rarely used in practice because it foregoes any additional learning from deliberate exploration of the action space. Rather, to incorporate explicit learning by sampling rewards from the action space, agents follow a \emph{behavior policy} that defines how the agent behaves in an online setting. Different learning algorithms implement different behavior policies. 

We define such a bandit learning algorithm $\mathcal{A}$ as a mapping from a history $\mathcal{H}$ to a behavior policy $\pi \in \Delta(|A|)$, which is a probability distribution across actions. Such behavior policies typically balance exploration and exploitation, both to gather new reward information from the environment and to capitalize on current value estimates. 

We study the behavior of three broad types of behavior policies, each stemming from a certain type of bandit learning algorithm. Some algorithms devise a behavior policy which ensures that for any history, all actions have non-zero probability; we refer to agents using these algorithms as \textit{persistently random} bandits. Other algorithms may begin their play with random sampling but converge to selecting actions according to some degenerate distribution. We call agents which use this kind of algorithm \textit{greedy-in-the-limit}, consistent with standard nomenclature. 

\begin{definition}[Greedy-in-the-Limit]\label{def:GITL} A bandit agent is said to be \textit{greedy-in-the-limit} if it employs a learning algorithm $\mathcal{A}(\mathcal{H}_t)$ such that, under a fixed reward distribution, the policy converges to a limiting distribution $\mathcal{A}(\mathcal{H}_\infty)$ satisfying $\lim_{t \to \infty} \mathcal{A}(\mathcal{H}_t): = \pi_\infty^*$,  where $\pi_\infty^*$ is the observed greedy policy.
\end{definition}

Finally, some algorithms will map any history (at any point in time) to a degenerate distribution, and we refer to agents using such algorithms as \textit{deterministic}. Our analysis begins with \textit{persistently random} bandits, then turns to \textit{greedy-in-the-limit} bandits, and concludes with \textit{deterministic} bandits.

Standard bandit learning algorithms are largely \textit{mean-based}, where a behavior policy depends on average reward values per arm. These algorithms exhibit a more general property of \textit{path-invariance}, behaving in a manner that is invariant to the order in which a sequence of rewards was realized. 

\begin{definition}\label{def:path-invariant} A bandit agent $i$ is a \textit{path-invariant bandit} if learning algorithm $\mathcal{A}_i(\mathcal{H})$ produces the same policy output $\pi$ for all alternative orderings $\alpha^{'}_a, \rho^{'}$ of $\alpha_a, \rho \in \mathcal{H}$, where an alternative ordering $\alpha^{'}_a, \rho^{'}$ satisfies $\alpha_a \rho = \alpha_a^{'} \rho^{'}$ and $\alpha_a \vec{1} =  \alpha_a^{'} \vec{1} $.
\end{definition}
Two histories are \textit{path-equivalent} if they induce the same estimates in \textit{path-invariant} bandits. 

\begin{definition}\label{def:path-equivalent} Two histories $\mathcal{H}_1, \mathcal{H}_2$ are \textit{path-equivalent histories} if $\forall a \in A, \; \; \alpha_{1,a} \rho_{1} = \alpha_{2,a}\rho_{2} $ and $\alpha_{1,a} \vec{1} =  \alpha_{2,a} \vec{1} $. 
\end{definition}

\subsection{Learning Process}
We model the learning process of bandits playing a repeated Prisoner's Dilemma as a Markov Chain and analyze the dynamics of a walk on this chain. We index variables in the following order (when applicable): first by player ($i$), then by action ($a$), then by time ($t$). 

\paragraph{State Formulation}
For \textit{n} players, we create a state representation on the lattice $s_t \in \mathbb{N}^{2^{n}}$, which corresponds to the count of each outcome $o \in \{H,L\}^{n}$ at time $t$. This enumeration is illustrated for the two-player setting in Table \ref{tab:state_index}.  

\begin{table}[ht]
  \begin{tabular}{|c|c|c|}
    \hline
    \makecell{Element in $s_t$} & 
    \makecell{Occurrences of the outcome vector $o_t$ \\being counted by this state variable} & 
    \makecell{Reward vector $r_t$ associated with $o_t$} \\
    \hline
    $s_{t,1}$ & $(H,H)$ & $(\beta,\beta)$\\
        $s_{t,2}$ & $(H,L)$ & $(0,1)$\\
        $s_{t,3}$ & $(L,H)$ & $(1,0)$\\
        $s_{t,4}$ & $(L,L)$ & $(\gamma,\gamma)$\\
    \hline
    \end{tabular}
    \caption{Index of $s$ and corresponding outcome and reward vectors.}
    \label{tab:state_index}
\end{table}

For example, if agents have played $10$ rounds, with $4$ resulting in the outcome $(H,H)$, $3$ resulting in $(H,L)$, $2$ resulting in $(L,H)$, and $1$ resulting in $(L,L)$, then the state would be $s_{10} = (4,3,2,1)$. Importantly, this state construction is only visible to omniscient viewers of the game and not to the agents themselves, who are only able to observe their own actions and rewards.\footnote{For naive agents, full state reconstruction is impossible.}

\paragraph{Computing Value Estimates}
The state construction is analytically helpful because, owing to path-invariance, the state captures all information about history needed to compute value estimates, as summarized below.\footnote{These calculations of value estimates based on state rely on the denominators of the corresponding terms on the right-hand sides of the equations being non-zero.} 

\begin{table}[h!]
    \centering
    \setlength{\extrarowheight}{2pt}
    \begin{tabular}{c c}
    $V_{1,H}(s_t) = \frac{\beta s_{t,1}}{s_{t,1}+s_{t,2}}$ &  $V_{1,L}(s_t) = \frac{s_{t,3}+\gamma s_{t,4}}{s_{t,3}+s_{t,4}}$\\
    $V_{2,H}(s_t) = \frac{\beta s_{t,1}}{s_{t,1}+s_{t,3}}$ &  $V_{2,L}(s_t) = \frac{s_{t,2}+\gamma s_{t,4}}{s_{t,2}+s_{t,4}}$\\
    \end{tabular}
    \label{tab:value_function}
\end{table}
Our analytical results characterize long-run equilibrium outcomes. We first confirm that our construction of the play of the game as a Markov Chain is valid for all path-invariant bandits (which include all algorithms discussed in this paper).

\begin{lemma}\label{lem:markov}
With path-invariant bandits, $s_t$ adheres to the Markov property.
\end{lemma}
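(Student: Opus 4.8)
The plan is to show that the one–step transition law of $s_t$ is a fixed Markov kernel, i.e.\ that $\Pr(s_{t+1}=s' \mid s_t=s,\, s_{t-1},\dots,s_0)$ depends on the past only through $s_t=s$. The true ``microstate'' after $t$ rounds is the full ordered play $(o_1,\dots,o_t)$ (equivalently, the pair of ordered histories $\mathcal{H}_{0,t},\mathcal{H}_{1,t}$), of which $s_t$ is a coarsening; since a function of a Markov chain need not be Markov, the content of the lemma is that this particular coarsening is lumpable, and path-invariance is exactly the property that makes that work.

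First I would record that each player's action-value sufficient statistics are recoverable from $s_t$ alone. Because rewards in the Prisoner's Dilemma are a deterministic function of the outcome (Table~\ref{state_index}), any play realizing state $s_t=(s_{t,0},s_{t,1},s_{t,2},s_{t,3})$ has pulled player $0$'s $H$-arm $s_{t,0}+s_{t,1}$ times with cumulative reward $\beta s_{t,0}$, and its $L$-arm $s_{t,2}+s_{t,3}$ times with cumulative reward $s_{t,2}+\gamma s_{t,3}$; the symmetric formulas (swapping $s_{t,1}$ and $s_{t,2}$) hold for player $1$. Thus for each player $i$ and action $a$ the pair $(\alpha_{i,a}\vec{1},\ \alpha_{i,a}\rho)$ — indeed the entire multiset of (action, reward) pairs in $\mathcal{H}_{i,t}$ — is a deterministic function of $s_t$. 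Consequently any two realized histories that lead to the same $s_t$ are reorderings of one another, and in particular path-equivalent.

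Next I would invoke path-invariance: since $\mathcal{A}_i$ returns the same behavior policy on all reorderings of a given history, the policy $\pi_{i,t}=\mathcal{A}_i(\mathcal{H}_{i,t})$ is a well-defined function $\Pi_i(s_t)$ of the state (the degenerate ``$v=0$ for an unpulled arm'' case is likewise a function of $s_t$ and is covered). Now condition on the whole past $s_0,\dots,s_t$: under the standard modeling assumption that each agent's round-$t$ action is drawn from $\pi_{i,t}$ using randomness independent of the history, the next outcome is $o_{t+1}=(a_0,a_1)$ with $a_0\sim\Pi_0(s_t)$ and $a_1\sim\Pi_1(s_t)$ independently, and the state update is the deterministic unit increment $s_{t+1}=s_t+e$, where $e$ is the basis vector indexed by $o_{t+1}$ as in Table~\ref{state_index}. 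Hence the conditional distribution of $s_{t+1}$ given $s_0,\dots,s_t$ equals a kernel $P(s_t,\cdot)$ depending only on $s_t$, which is precisely the Markov property.

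I expect the only real subtlety — and the step to state carefully — to be the reduction in the second paragraph: that fixing $s_t$ pins down each player's per-arm pull counts and cumulative rewards, so that all consistent histories are mere reorderings and the appeal to path-invariance is legitimate. Everything after that is bookkeeping: writing $P(s,\,s+e)$ as the product of the two marginal action probabilities induced by $\Pi_0(s)$ and $\Pi_1(s)$, and checking it is a bona fide transition kernel on $\mathbb{N}^{2^n}$.
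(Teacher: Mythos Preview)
Your proposal is correct and follows essentially the same approach as the paper: both arguments show that $s_t$ determines each player's per-arm pull counts and cumulative rewards, so all histories consistent with $s_t$ are path-equivalent, whence path-invariance pins down the behavior policy as a function of $s_t$ and the transition kernel depends on $s_t$ alone. Your write-up is slightly more explicit (you give the exact formulas and name the lumpability issue), while the paper constructs a canonical reordered history from $s_t$ procedurally, but there is no substantive difference.
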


For a proof of this lemma, see Appendix \ref{prf:markov}. 

\subsection{Defining ``Collusion''}
With these value estimates, we define a \textit{collusive state} to be one where $V_{i,H}(s)>V_{i,L}(s) \; \forall \; i$ and a \textit{competitive state} to be one where $V_{i,H}(s)<V_{i,L}(s) \; \forall \; i$. However, defining these regions of the state space alone is insufficient to describe the long-run play that agents will settle on. Our motivating context of pricing algorithms leads us to focus on algorithms with some notion of rationality in their exploration over time. For this reason, we define \emph{learning to collude in the limit} as being characterized by the existence of some time $T$ after which $V_{i,H}(s)>V_{i,L}(s) \; \forall \; i$.\footnote{In what follows, we describe this limiting behavior interchangeably as the emergence of "seemingly collusive behavior," "collusive behavior" or "collusion." It should be clear from our model and results that none of these characterizations are implying the existence of anticompetitive collusion.} Put differently, for algorithms that learn to collude, there is some time $T$ after which $\pi_{1,t}^{*} = \pi_{2,t}^{*} = H$ for all $t>T$. This definition differs from other possible notions of \textit{apparent collusion}, which might, for example, measure how frequently $(H,H)$ is observed.

\subsection{Synchronicity}
Synchronicity is defined for each agent-action pair. It measures how often the other agent matches an agent’s chosen action in the same round, conditional on the agent taking the given action. For each agent-action pair, its number of rounds both played that action divided by the number of rounds the focal agent played that action. 

More formally, let $\xi$ be the transformation of the state space $S$ defined below. Denote $\xi_{i,H}$ to be the proportion of action samples in Player $i$'s history from action $H$ that were synchronous (i.e., the portion of $(H,H)$ plays over Player $i$'s total arm pulls of $H$). Let $\xi_{i,L}$ be the corresponding value for the $L$ action. Formally:

\begin{table}[h!]
    \centering
    \setlength{\extrarowheight}{2pt}
    \begin{tabular}{c c}
    $\xi_{1,H}(s_t) = \frac{s_{t,1}}{s_{t,1}+s_{t,2}}$ &  $\xi_{1,L}(s_t) = \frac{s_{t,4}}{s_{t,3}+s_{t,4}}$\\
    $\xi_{2,H}(s_t) = \frac{s_{t,1}}{s_{t,1}+s_{t,3}}$ &  $\xi_{2,L}(s_t) = \frac{s_{t,4}}{s_{t,2}+s_{t,4}}$\\
    \end{tabular}
    \label{tab:synchronicity}
\end{table}

This transformation provides an elegant characterization of a \textit{collusive region} in the transformed state space, ultimately in terms of this play-specific synchronicity.  The collusive region is illustrated in Figure \ref{fig:synchronous_space}. In particular, we now have a linear boundary on the region, with collusion defined as

\[
\beta \xi_{i,H} + (1-\gamma)\xi_{i,L} > 1, \quad \forall i \in \{1,2\}.
\]

\begin{figure}[ht]
    \centering
    \includegraphics[width=.5\textwidth]{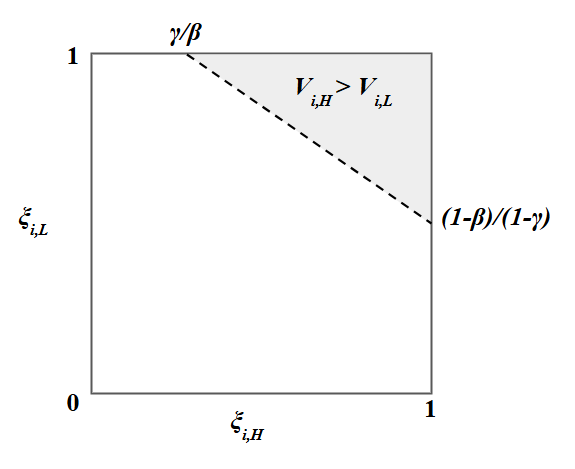}
    \caption{Region of $\xi_{i}$ where collusion is observed to produce a higher reward than competition.}
    \label{fig:synchronous_space}
\end{figure}

Our more involved definition of synchronicity is also motivated by the need to differentiate between collusive versus competitive outcomes, which precludes simply using a scalar count of same-action plays as our measure.\footnote{To see this, consider 
two states $s^1 = (10,1,1,0)$ and $s^2 = (5,1,1,5)$ in a game defined by $\beta = .5$ and $\gamma = .25$. Both states have the same number of outcomes in which agents chose the same action ($10$ states), but based on the definition of collusion above, $s^1$ is a competitive state while $s^2$ is collusive.}

\paragraph{No covariance guarantees no collusion}
We now show that there is a direct relationship between the empirical covariance in action plays as $t \rightarrow \infty$ and collusion as an emergent outcome. Specifically, we show that if realized action plays between agents are entirely uncorrelated, agents will never find collusion to be optimal, whatever the values of the game parameters $\beta$ and $\gamma$.

\begin{proposition}[No covariance guarantees no collusion]\label{prop:no-covariance}
    Let $\tilde{a}_{i,t} =\mathds{1}
\{a_{i,t}=H\} $ be the sequence of actions for Player $i$ up to time $t$, equal to $1$ if Player $i$ played $H$ in period $t$. Then
    $Cov(\tilde{a}_{1},\tilde{a}_{2})\leq0 \quad \Leftrightarrow\quad V_{i,H}(s_t)<V_{i,L}(s_t), \; \forall \;\beta, \gamma$.
\end{proposition}
The proof of this claim is in Appendix \ref{prf:no-covariance}. This general result provides conditions that ensure agents \textit{will not} collude for all Prisoner's Dilemma parameterizations and motivates the examination in the next section of a class of algorithms which experience no covariance between action plays in the limit.

\section{Persistently Random Algorithms}
Our synchronicity lens suggests a natural starting point: algorithms with more built-in randomness should be less likely to collude. We study the ``most random'' algorithms as a class with action selection shaped by persistent randomness. A \textit{persistently random} bandit employs a behavior policy that always places non-zero probability on all actions and, for a static reward distribution, will converge to a policy that places non-zero probability on each action. Formally, we define this class below.

\begin{definition}[Persistently Random Bandit]\label{def:persistently-random}
A bandit agent is said to be \textit{persistently random} if it employs a learning algorithm $\mathcal{A}$, mapping to a policy $\pi_{i,t}$, such that $\inf_{t,a}\pi_{i,t}(a)\geq \epsilon > 0$.  That is, agents retain a non-zero probability floor of selecting every action in any time period.
\end{definition}


An example of a persistently random algorithm is the \textit{epsilon-greedy} algorithm, which is commonly used as the basis for action selection in bandit and Q-learning algorithms. In devising a behavior policy, the epsilon-greedy approach incorporates randomness and is parameterized by some value $\epsilon \in (0,1)$. At time $t$, the agent selects the highest valued (greedy) action  with probability $1-\epsilon$; with probability $\epsilon$, the agent selects randomly and uniformly across all actions at its disposal, described below:
\begin{equation}
\label{eq:epsilon-greedy}
a_{i,t} = \begin{cases}\text{argmax}_{a \in A} \; V_{i,a}( \mathcal{H}_{i,t}) & \text{with prob.} \; \; 1-\epsilon \\
a, \;\forall \; \; a \in A & \text{with prob.} \; \frac{\epsilon}{|A|}
\end{cases}.
\end{equation}

Epsilon-greedy algorithms are a ``textbook'' persistently random algorithm, although other approaches could be conceived \citep{Sutton1998}. The evolution of value estimates for the epsilon greedy algorithm are shown in Figure \ref{fig:eg_play}.

\begin{figure}[ht]
    \centering
    \begin{subfigure}{0.49\textwidth}
        \includegraphics[width=\textwidth]{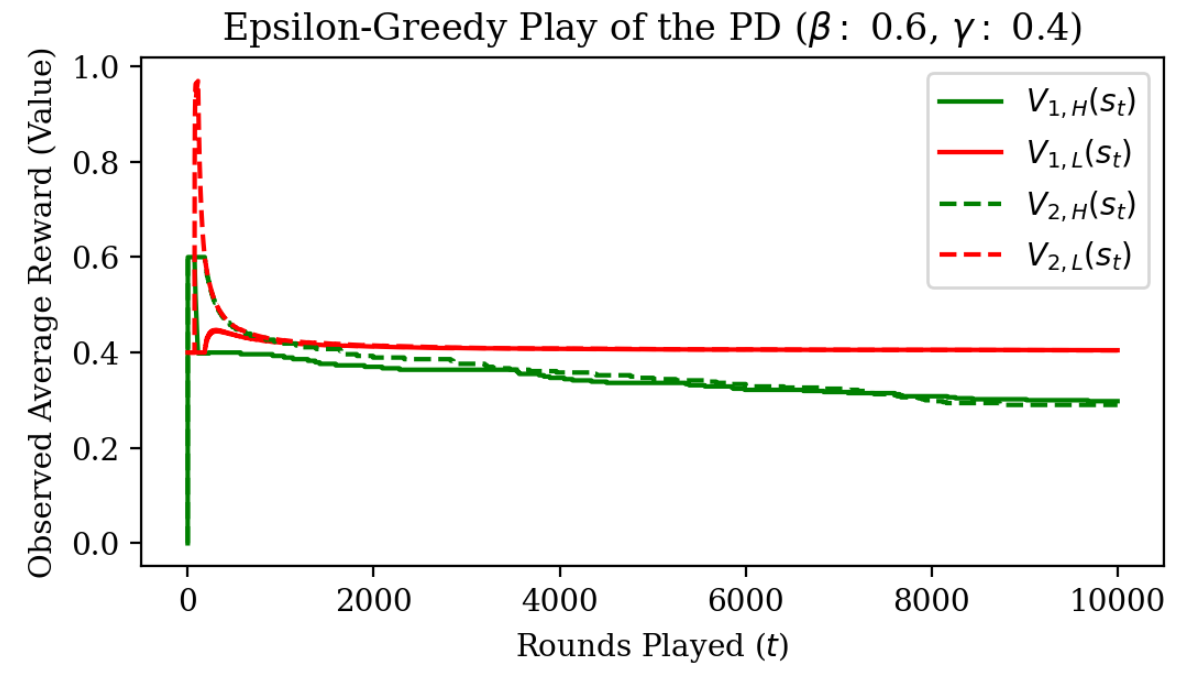}
    \end{subfigure}
    \hfill
    \begin{subfigure}{0.49\textwidth}
        \includegraphics[width=\textwidth]{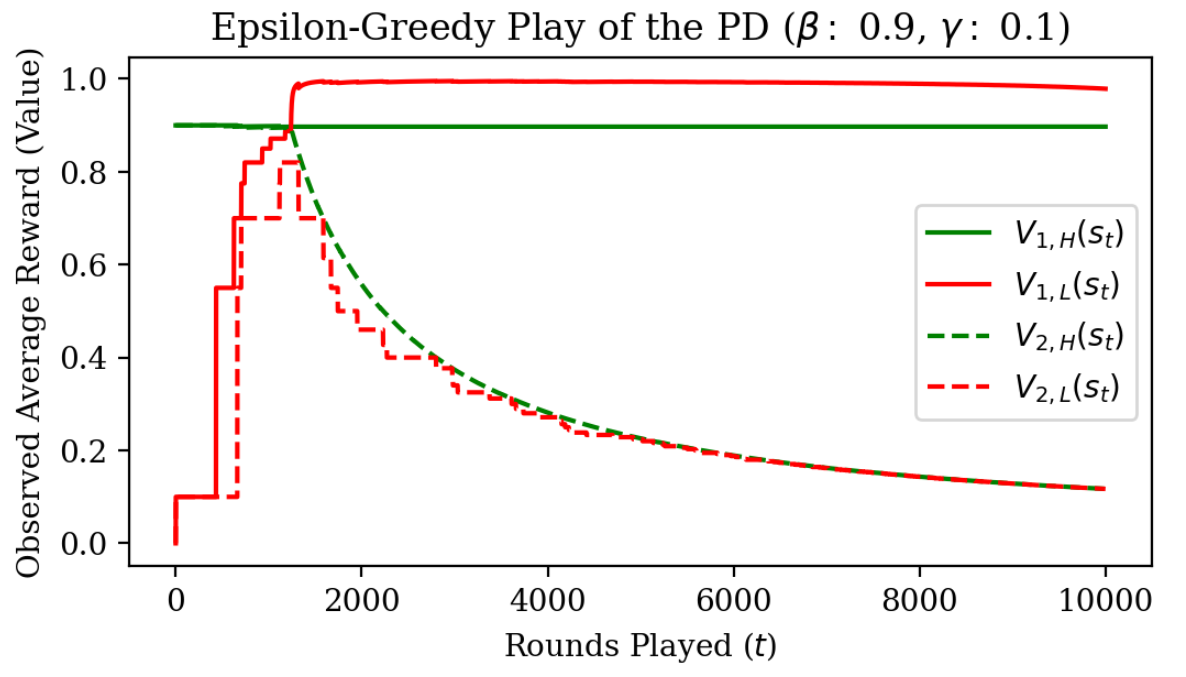}
    \end{subfigure}
    \caption{Value estimates for each agent's action across a 10,000 round Prisoner's Dilemma with epsilon-greedy agents ($\epsilon = .01$), for two sets of game parameters and epsilon values.  In both cases, the agents ultimately learn to compete (i.e., they learn that playing L has higher expected payoff).}
    \label{fig:eg_play}
\end{figure}

Our first key result establishes that the epsilon-greedy class of learning algorithms\footnote{Here, we refer to epsilon-greedy with constant epsilon; algorithms with a decaying epsilon value are discussed later under \textit{greedy-in-the-limit} analysis.} will always compete in the limit. 
\bigskip
\begin{proposition}[Epsilon-greedy will compete in the limit]\label{prop:epsilon-greedy}
  When epsilon-greedy bandits play an infinitely repeated Prisoner's Dilemma, as $t \rightarrow \infty$,  they will never learn to collude for any $\epsilon_1 > 0$, $\epsilon_2 > 0$.
\end{proposition}

The proof of the result analyzes each of the four regimes corresponding to greedy actions by agents, and follows from a higher expected value to playing $L$ than $H$ in any of these regimes, leading to a fixed joint distribution of $\Pr(a_i = H) = \tfrac{\epsilon_i}{2}$. The full proof is detailed in Appendix \ref{prf:epsilon-greedy}.  

Proposition \ref{prop:epsilon-greedy} is actually a special case of a broader setting in which both agents are employing\textit{ jointly convergent} persistently random algorithms, where agents each converge to playing a fixed random policy. We next show that the result generalizes to this broader setting. 

\begin{proposition}\label{prop:convergence-competition}
    Let $p_{i}$ be the sequence of $\{\pi_{i,t}(H)\}_{t=0}^\infty$. If agents are persistently random and  $(p_{1,t}, p_{2,t}) \xrightarrow[t\rightarrow\infty]{}(p_1^\star, p_2^\star)$ in the joint learning process, then $V_{i,L}(s_\infty)>V_{i,H}(s_\infty)\quad\forall \; \beta,\gamma, i$, and thus, agents will learn to compete in the limit.  
\end{proposition}

A proof of this proposition is in Appendix \ref{prf:convergence-competition}.

Our results might seem to suggest that in the presence of sufficient persistent randomness, naive algorithms will not learn to collude in the limit of their joint learning processes. There is reason to be cautious, however, since the results do not entirely preclude the emergence of seemingly collusive behavior. Without constraints on joint convergence, persistent randomness alone, even with rationality constraints such as imposing rank-ordered play, cannot guarantee convergence. An example of such an algorithm is provided in Appendix \ref{appendix:examples}, detailing how a designer could construct a non-persistently-random policy such that agents do not jointly converge to some set of limiting policies. In this case, these persistently random agents may still learn to collude. 

Moving beyond stylized counterexamples in which designers explicitly create collusive algorithms, a broader issue to consider is that algorithms used in practice often will \textit{not} be persistently random, and we may not want them to be. This is primarily due to the fact that in the standard bandit setting, 
a rational agent would not want to use an algorithm of this class because in this setting with a stationary reward distribution, the use of a persistently random algorithm cannot lead to sub-linear regret. In contrast, the online learning literature describes non-persistently random algorithms that are superior by guaranteeing sub-linear regret for the stationary setting. 

\section{Greedy-in-the-Limit Algorithms}
Non-persistently random algorithms still display randomness in their policies, but are asymptotically deterministic and converge to playing their greedy action. Guaranteeing sub-linear regret requires that the probability of playing actions that incur regret (i.e., sub-optimal actions) goes to zero in the limit. Randomized algorithms can achieve this by being \textit{greedy in the limit}, that is, by learning to play only the observed optimal action in the long run. 


\subsection{Explore-then-commit}
An example of an non-persistently random algorithm is the explore-then-commit (ETC) algorithm which functions similarly to an A/B test. For some prespecified number of periods, the algorithm samples actions non-adaptively, either uniformly at random (that is, each action is chosen with equal probability independent of history) or by cycling across actions. This sampling constitutes the \textit{explore} phase of the algorithm. Once the prespecified number of periods is reached, the algorithm \textit{commits} to the action with the highest average reward. 

ETC algorithms which deterministically cycle across actions during exploration (for instance, by using a modulo operator) fall into a more general class of \textit{deterministic} algorithms covered by results presented in Section 6. 
For the uniformly at random implementation, the formal action selection algorithm is
\begin{equation}
\label{eq:ETC-uniform-random}
  a_t = \begin{cases}
a, \;\forall \; \; a \in A & \text{with prob.} \; \frac{1}{|A|}, \;\; \text{if}\; t < t^*\\
\text{argmax}_{a \in A} \; V_a( \mathcal{H}_t) & \text{with prob.} \; \; 1, \;\; \text{if} \; t \geq t^*
\end{cases}.
\
\end{equation}

Figure~\ref{fig:etc-grid} illustrates the probability of collusion for a simulation of two symmetric ETC agents that play according to (\ref{eq:ETC-uniform-random}). Each tile corresponds to a set of game parameters and its color corresponds to the fraction of trials ($500$ per tile) in which agents learn to collude. Exploration is for $t^*$ rounds (both agents explore for the same number of periods), after which the agents commit to an action. The probability of collusion appears to decrease in exploration time, increase in $\beta$ and decrease in $\gamma$. 

\begin{figure}[htbp]
  \centering
  \begin{subfigure}[b]{0.45\textwidth}
    \includegraphics[width=\linewidth]{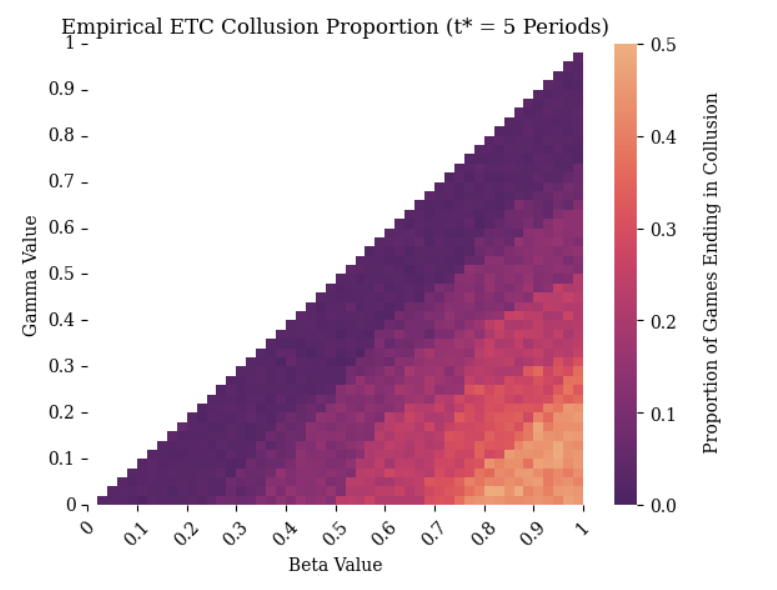}
    \caption{$t^*=5$}
  \end{subfigure}
  \hfill
  \begin{subfigure}[b]{0.45\textwidth}
    \includegraphics[width=\linewidth]{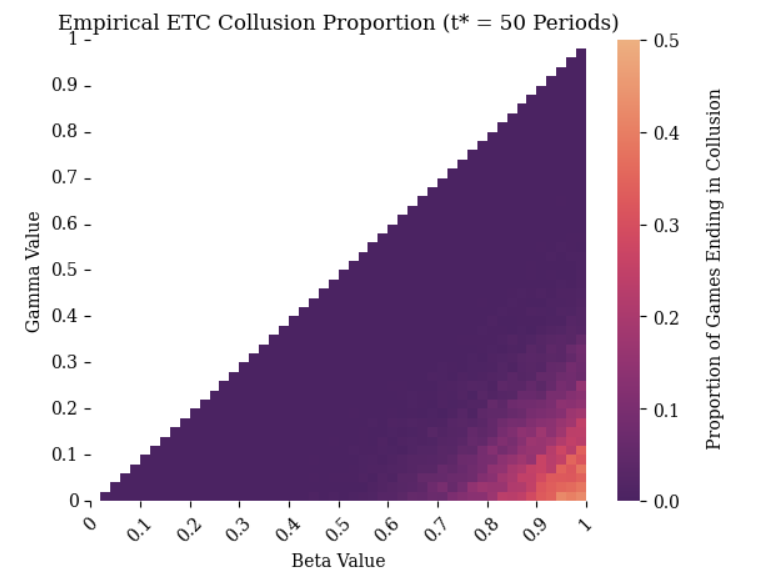}
    \caption{$t^* = 50$}
  \end{subfigure}
  
  \caption{Collusion probability by PD parameters for symmetric ETC agents}
  \label{fig:etc-grid} 
\end{figure}


To analyze the more general limiting behavior of symmetric agents that play according to (\ref{eq:ETC-uniform-random}), first note that during the exploration phase, there is a uniform probability across each outcome in $o$ for every exploration period. This is because sampling is uniformly random across actions for each agent with independent draws. Thus, the distribution across states at $t=t^*$ follows a multinomial distribution with a uniform probability on each outcome.  

Next, let $\bar{s}_t$ be the fraction of plays of each outcome at time $t$, obtained by dividing each component of the state $s_t$ by $t$. The likelihood of collusion is $\Pr(\bar{s}_{t^*} \in C|\beta, \gamma, t^*)$, where $C$ is the set of average states where both agents observe $H$ to be higher-valued, defined formally as

\begin{equation}
\label{eq:ETC-collusion-prob}
\begin{aligned}
C = \{\bar{s} \in \Delta_3, \frac{\bar{s}_1 \beta}{\bar{s}_1 +\bar{s}_2} > \frac{\bar{s}_3+\bar{s}_4\gamma}{\bar{s}_3+\bar{s}_4} \land  \frac{\bar{s}_1 \beta}{\bar{s}_1 + \bar{s}_3} > \frac{\bar{s}_2+\bar{s}_4\gamma}{\bar{s}_2+\bar{s}_4}\}.
\end{aligned}
\end{equation}
For $t^* = 1$,  $\Pr(\bar{s} \in C|\beta,\gamma,t^{*} = 1) = 0.25 \; \forall \; \beta, \gamma$. Additionally, consistent with Proposition \ref{prop:etc-exp}, as $t^* \rightarrow \infty$, there is no collusion.\footnote{As $t^* \rightarrow \infty, \bar{s}$ converges to $(.25,.25,.25,.25)$, which in turn yields value estimates $V_{1,H}(\bar{s}) = V_{2,H}(\bar{s}) = \frac{\beta}{2}$ and $V_{1,L}(\bar{s}) = V_{2,L}(\bar{s}) = \frac{1 + \gamma}{2}$.}  

For intermediate values of $t^*$, while a closed-form expression for the likelihood of collusion is not available, Figure \ref{fig:etc-grid} suggests that the probability is lower for higher values of $t^*$. This is proved in our next result. More precisely, we demonstrate that $\Pr(\bar{s}_{t^*} \in C|\beta, \gamma, t^*)$ is exponentially decreasing in $t^*$. 


\begin{proposition}\label{prop:etc-exp}
For symmetric ETC agents, 

(a) $\Pr(s_{t^*} \in C|\beta, \gamma, t^*) \leq c_1e^{-t^*c_2}$ for some constants $c_1$ and $c_2$ and

(b) An approximation of $\Pr(s_{t^*} \in C|\beta, \gamma, t^*)$ using a Gaussian approximation of the multinomial distribution takes the form 
\begin{equation}
\Pr(s_{t^*} \in C|\beta, \gamma, t^*)
   \;\approx\;
   \exp\!\{-t^*\,C(\beta,\gamma)\},
\end{equation}
where 
\[
C(\beta,\gamma)=
\frac{(1-\beta+\gamma)^{2}}
     {\,2\bigl[\beta^{2}+(1-\gamma)^{2}\bigr]
      +(1+\beta-\gamma)^{2}}.
\]
\end{proposition}
The proof of the result is presented in Appendix \ref{prf:etc-exp}. Intuitively, the result follows from the shrinking variance of this multinomial around the mean, which is $[.25,.25,.25,.25]$. This state is defined by no correlation in action plays, and so it never falls into the collusive region of the state space, regardless of game parameters, as per Proposition \ref{prop:no-covariance}. 

In a sense, the ETC algorithm and the epsilon-greedy algorithm are at two opposing ends of a spectrum of probabilistic exploration. There is no time variation in how epsilon-greedy allocates action plays to exploration, and thus, the design does not permit the algorithm to shift towards more productive exploitation as it learns over time from exploration. At the other end of the spectrum, the ETC algorithm concentrates exclusively on exploration early on, and then switches exclusively to exploitation after a point in time. One might consider whether there are algorithms that represent a middle ground, exploring more aggressively early on but gradually shifting to more active exploitation over time. One such algorithm, \textit{epsilon-greedy with decaying epsilon}, is analyzed in the following section.
\subsection{Epsilon-greedy With Decaying Epsilon}
A common variation of the standard epsilon-greedy algorithm involves an exploration probability that decreases in $t$, denoted $\epsilon(t)$. The standard approach achieves decay via geometric decay. The epsilon value is multiplied by a constant value $\eta \in (0,1)$ at every timestep. We make the standard assumption that $\epsilon(0) = 1$ for notational compactness and in order to parameterize this algorithm by a single parameter $\eta$. Thus, the exploration probability at time $t$ is denoted as

\[ \epsilon(t) = \eta^t .\]

At each time step, the action selection process follows a rule similar to that of the epsilon-greedy approach described in (\ref{eq:epsilon-greedy}), shown as

\begin{equation}
\label{eq:epsilon-decay}
a_t = \begin{cases}\text{argmax}_{a \in A} \; V_a(\mathcal{H}_t) & \text{with prob.} \; \; 1-\epsilon(t) \\
a, \;\forall \; \; a \in A & \text{with prob.} \; \frac{\epsilon(t)}{|A|}
\end{cases}.
\end{equation}

\begin{figure}[ht]
    \centering
    \begin{subfigure}{0.3\textwidth}
        \includegraphics[width=\textwidth]{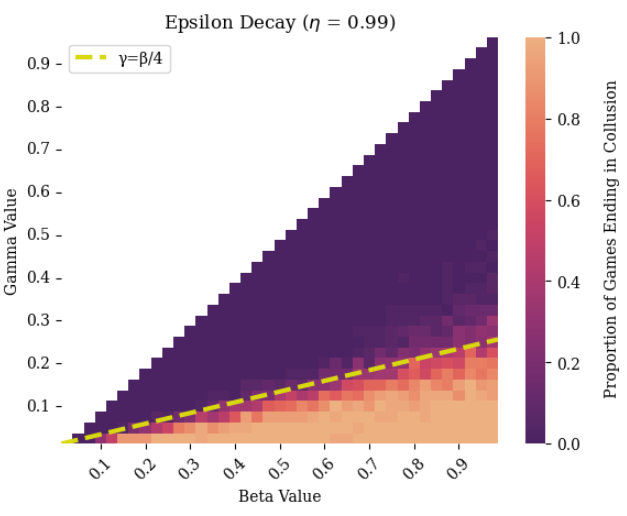}
    \end{subfigure}
    \hfill
    \begin{subfigure}{0.3\textwidth}
        \includegraphics[width=\textwidth]{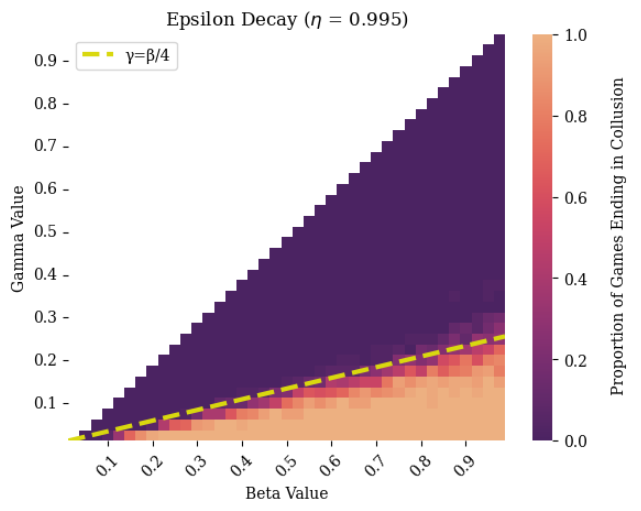}
    \end{subfigure}
    \hfill
    \begin{subfigure}{0.3\textwidth}
        \includegraphics[width=\textwidth]{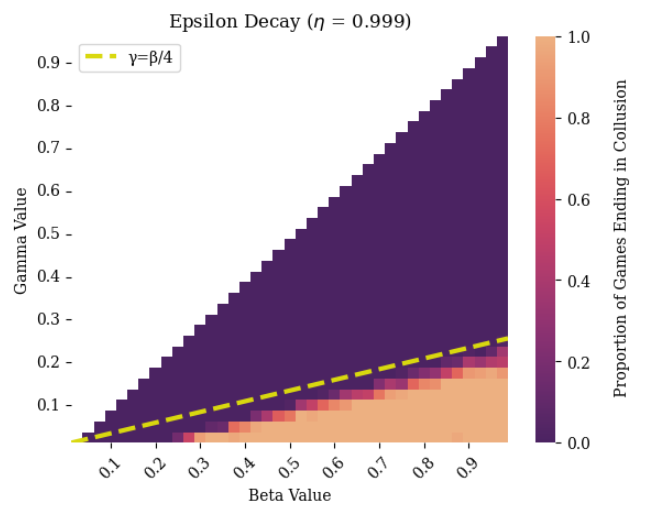}
    \end{subfigure}
    \caption{Proportion of games with epsilon-decay agents ending in a collusive equilibrium. Each plot demonstrates the empirical collusion probability across the game parameter space for a given decay rate of epsilon for both agents.}
    \label{fig:ed_play_grid}
\end{figure}

As described in Section 4, epsilon-greedy without decay always leads to non-collusive outcomes in the limit. However, when the exploration probability decays over time, the ensuing learning dynamics are subject to a more complicated structure. Over time, agents eventually settle on playing a single action as the probability of playing one of their exploratory actions approaches zero.\footnote{After 10,000 rounds, for $\eta = .999$, the probability of playing the non-greedy action is $2.2 * 10^{-5}$; for $\eta = .99$, this probability is $1.2 * 10^{-44}$.} However, both collusive and competitive outcomes are possible, as illustrated in Figure \ref{fig:ed_play_grid}, which report the empirical probability of settling on a collusive equilibrium for three different decay rates.  In this figure, each tile's color corresponds to the rate of collusion across $30$ trials, and each trial is comprised of $10,000$ rounds.

As the figure illustrates, as $\beta$ approaches $1$, the reward to colluding approaches that of playing $L$ while the opponent plays $H$, and agents are more likely to collude. Similarly, as $\gamma$ approaches $0$, the reward of competing approaches that of playing $H$ while the opponent plays $L$, and the likelihood of collusion increases. This relationship is consistent with intuition provided by Figure \ref{fig:synchronous_space} as an increase in the relative value of the cooperative outcome would increase the size of the collusive region of the state-space. As a result, the parameter space contains regions where collusion is very likely, regions where collusion is very unlikely, and boundary area where either result could plausibly emerge. This boundary follows approximately linear topographic lines and produces a more graduated transition from competition to collusion for lower $\eta$. A hard boundary can be approximated by the line $\gamma = \tfrac{\beta}{4}$, but we note that the $x$-intercept of this boundary increases for lower $\eta$. 

We can analyze the dynamics leading to this approximate boundary line. While, again, we have not derived a closed-form expression for the probability of collusion, we can explain the pattern uncovered in Figure \ref{fig:ed_play_grid} using an approximation. 
\begin{proposition}\label{prop:ed-beta}
The probability of collusion $\Pr(\bar{s}_\infty \in C)$ can be approximated as: 

\[\Pr(\xi_{1,H}(s_\infty)>\tfrac{\gamma}{\beta})
\;\approx\;
1-\mathrm{B}\!\bigl(\tfrac{\gamma}{\beta};\;\lambda_1,\lambda_2\bigr),\]

where $\lambda_1=\frac{1}{4(1-\eta^{2})}, \lambda_2=\frac{1+2\eta}{4(1-\eta^2)}$, and
$\mathrm{B}(x; \cdot)$ denotes the cumulative distribution
function of a Beta random variable with the
parameters $\lambda_1,\lambda_2$.
\end{proposition}
Under this approximation, $\mathbb{E}[\xi_{i,H}] = \frac{\lambda_1}{\lambda_1 + \lambda_2} = \frac{1}{2(1+\eta)}$ with small variance, as given by the Beta approximation.

A neat heuristic emerges from the proposition above. If we let $\eta \rightarrow 1$ and assume that variance synchronicity on the $H$ arm shrinks to $0$, then we have collusion likely emerging in the region $\frac{\gamma}{\beta} < \frac{1}{4}$. This describes our visual evidence from simulations presented in Figure \ref{fig:ed_play_grid} well, with a wider distribution in lower $\eta$ contributing to this ``softer'' boundary. More broadly, this analysis confirms the graphical depiction of collusion being the likely outcome when $\beta$ is high relative to $\gamma$. However, this approximation also describes a challenge in simplistic monitoring strategies for algorithmic collusion. Note that the limiting collusive behavior is actually \textit{induced by early competition increasing synchronicity on the $L$ arm.} In other words, early competitive behavior under certain payoff conditions actually \textit{leads} to collusion in the long run.  

While increased random exploration decreases collusion potential for ETC algorithms, with collusion probability decreasing in $t^*$, the same dynamic does not hold for epsilon-decay algorithms, as Figure \ref{fig:ed_play_grid} and Proposition \ref{prop:ed-beta} demonstrate. Thus, while persistent randomness in action selection guarantees competition in the limit, this no longer holds for greedy-in-the-limit algorithms which remove randomness from action selection in the limit. The next section extends this analysis further, examining algorithms that exhibit no randomness in action selection at all.

\section{Deterministic Algorithms}\label{sec:deterministic}
Many standard bandit algorithms follow a strictly deterministic approach for action selection. When a behavior policy follows a degenerate distribution (that is, at each step the algorithm prescribes exactly one action with probability $1$) for any realized history, we call the agent employing it a \textit{deterministic} bandit. Even without randomness, these algorithms may still explicitly explore the action space, such as algorithms that greedily select actions from index functions that incorporate exploration bonuses based on sample size. An example of these are \textit{upper confidence bound} algorithms, which are widely used for their ``smart'' experimentation methods and are deterministic in their history. 

\begin{definition}[Deterministic Bandit]\label{def:deterministic} A bandit agent is said to be \textit{deterministic} if it employs some learning algorithm $\mathcal{A}$ where $\mathcal{A}(\mathcal{H})$ follows a degenerate distribution for any $\mathcal{H}$.
\end{definition}

To keep our analysis aligned with the exploration-exploitation trade-off that is central to any sensible bandit algorithm, we restrict our attention to algorithms that will play each action at least once.  The \textit{upper confidence bound (UCB)} action selection process is as follows~\citep{lattimore2020bandit}.  Define:

\[\text{UCB}_a( \mathcal{H}_t) = \begin{cases} V_a(\mathcal{H}_t) + \sqrt{\frac{2\log 1/\delta}{\alpha_a\vec{1}}} & \text{if} \; \; \alpha_a \vec{1} \neq 0\\
 +\infty & \text{otherwise} \end{cases}.\]

Here, $\delta \in (0,1]$ serves as an exploration parameter that defines the width of the confidence interval. A lower $\delta$ lends more weight to sampling under-sampled actions. UCB selects actions based on:

\[a_t = \text{argmax}_{a \in A} \text{UCB}(a, \mathcal{H}_t),\]

with the \textit{argmax} function returning the maximal UCB-valued action, or some selection of the maximal UCB-valued actions when there are multiple actions with the same maximal UCB value. A common variant of this confidence bound construction, UCB1, is described in Appendix \ref{appendix:examples}.

\subsection{Symmetric deterministic algorithms}
We begin with a result that is the converse of Proposition \ref{prop:convergence-competition}, where we saw that jointly convergent  persistently random bandits will learn to compete. Here,  for any set of payoffs, symmetric, deterministic bandits (bandits that employ the same deterministic algorithm) will always learn to collude. To illustrate this process, we begin with a series of lemmas. 
\begin{lemma}\label{lem:monotonicity-H}
    For each player $i$, when $o_t = (H,H)$, either: \newline 1.  $V_{i,H}(s_{t+1}) > V_{i,H}(s_{t})$  or \newline 2. $V_{i,H}(s_{t+1}) = V_{i,H}(s_{t}) = \beta$.
\end{lemma}

For a proof of this claim, see Appendix \ref{prf:monotonicity-H}. We observe a parallel form of monotonicity for the value estimates of $L$ which is proved in Appendix \ref{prf:monotonicity-L}.

\begin{lemma}\label{lem:monotonicity-L}
    For each player $i$, when $o_t = (L,L)$, either: \newline 1. $V_{i,L}(s_{t+1}) < V_{i,L}(s_{t})$  or \newline 2. $V_{i,L}(s_{ t+1}) = V_{i,L}( s_{t}) = \gamma$ .
\end{lemma}  

We next establish that symmetric, deterministic bandits will play the same action in all subsequent periods if they have observed path-equivalent histories. 
\begin{lemma}\label{lem:subsequent}If at some period $t^{*}$, symmetric, deterministic, and path-invariant bandits have path-equivalent histories, they will play the same action in every subsequent period $t\geq t^{*}$.
\end{lemma}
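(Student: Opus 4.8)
The plan is to prove this by induction on the period $t \geq t^{*}$, carrying the \emph{strengthened} inductive hypothesis that the two players' histories $\mathcal{H}_{0,t}$ and $\mathcal{H}_{1,t}$ remain \emph{path-equivalent} in the sense of Definition~5. Once path-equivalence is in hand at period $t$, symmetric play at $t$ is immediate: by path-invariance the output $\mathcal{A}_i(\mathcal{H})$ depends on a history only through the per-arm reward sums $\alpha_a\rho$ and counts $\alpha_a\vec{1}$; since the bandits are symmetric they run the \emph{same} algorithm $\mathcal{A}$, so path-equivalence gives $\mathcal{A}(\mathcal{H}_{0,t}) = \mathcal{A}(\mathcal{H}_{1,t})$; and since $\mathcal{A}$ is deterministic this common behavior policy is a degenerate distribution placing probability $1$ on a single action. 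Both players therefore select that same action, so $o_t \in \{(H,H),(L,L)\}$.

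The base case $t = t^{*}$ is exactly the hypothesis of the lemma combined with the observation above. For the inductive step, assume $\mathcal{H}_{0,t}$ and $\mathcal{H}_{1,t}$ are path-equivalent, so that (by the argument above) both players play the same action $a \in \{H,L\}$ in period $t$, giving outcome $(a,a)$. Here the diagonal symmetry of the payoff matrix in Table~\ref{reduced_payoff} is what is being used: the outcome $(H,H)$ pays $(\beta,\beta)$ and $(L,L)$ pays $(\gamma,\gamma)$, so both players receive the \emph{same} reward $r$ (equal to $\beta$ or $\gamma$) for having played the \emph{same} action $a$. Passing from period $t$ to $t+1$ thus appends the identical pair $(a,r)$ to each player's history: for every arm $a' \in A$, the count $\alpha_{i,a'}\vec{1}$ increases by $1$ if $a'=a$ and by $0$ otherwise, and the reward sum $\alpha_{i,a'}\rho_i$ increases by $r$ if $a'=a$ and by $0$ otherwise — the same increments for $i=0$ as for $i=1$. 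Since the pre-update quantities satisfied $\alpha_{0,a'}\rho_0 = \alpha_{1,a'}\rho_1$ and $\alpha_{0,a'}\vec{1} = \alpha_{1,a'}\vec{1}$ for all $a'$, so do the post-update quantities; hence $\mathcal{H}_{0,t+1}$ and $\mathcal{H}_{1,t+1}$ are path-equivalent, closing the induction.

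The only real subtlety — and the step I would be most careful to state correctly — is that path-equivalence is a statement about the \emph{aggregate} per-arm statistics $(\alpha_a\rho,\ \alpha_a\vec{1})$, not about the two histories being literally identical: at $t^{*}$ the players may have arrived at path-equivalent histories via entirely different sequences of play, which is exactly why path-invariance (and not merely determinism) is needed — a deterministic algorithm that conditioned on the order or fine structure of past rewards could split the players apart at $t^{*}$. Given path-invariance, the argument reduces to the elementary bookkeeping above, and, notably, makes no use of the monotonicity lemmas; those enter only later, when one wants to pin down \emph{which} symmetric action is eventually played.
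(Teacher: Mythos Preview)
Your proof is correct and follows essentially the same approach as the paper: an induction on $t\geq t^{*}$ with the strengthened hypothesis that histories remain path-equivalent, using symmetry plus path-invariance plus determinism to force identical play, and then the diagonal payoff structure to show the update preserves path-equivalence. Your write-up is in fact slightly crisper in isolating why path-invariance (and not merely determinism) is required, and in noting that the monotonicity lemmas play no role here.
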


These lemmas lead to our general result, that deterministic bandits learn to collude:

\begin{proposition}\label{prop:deterministic}When symmetric and deterministic bandit players play an infinitely repeated Prisoner's Dilemma, there exists some finite time period $T$ such that for any $t \geq T$, $\pi_{1,t}^{*} = \pi_{2,t}^{*} = H$. 
\end{proposition}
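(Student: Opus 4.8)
The plan is to reduce everything to \emph{symmetric} play via Lemma~4, after which the value-estimate formulas degenerate to constants and the claim follows almost immediately.

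First I would apply Lemma~4 at the very first period: before any action is chosen, both players have the empty history, which is vacuously path-equivalent (every action-play/reward dot product and every play count is $0$). Since both players run the same deterministic, path-invariant algorithm, Lemma~4 with $t^{*}=0$ shows that they play the same action in every period. Hence the only outcomes that can ever arise are $(H,H)$ and $(L,L)$, so $s_{t,1}=s_{t,2}=0$ for all $t$.

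Next I would invoke the standing restriction that the common algorithm is non-degenerate and plays each action at least once. Because the algorithm is deterministic and play is symmetric, the realized sequence of outcomes (hence of states $s_t$) is completely determined, and along it each arm is eventually pulled; so there is a finite $T$ after which both $(H,H)$ and $(L,L)$ have occurred at least once, i.e.\ $s_{t,0}\ge 1$ and $s_{t,3}\ge 1$ for every $t\ge T$. Substituting $s_{t,1}=s_{t,2}=0$ into the value formulas and using this positivity gives, for each player $i$ and every $t\ge T$,
\[
v_{i}(H,s_{t}) = \beta \qquad\text{and}\qquad v_{i}(L,s_{t}) = \gamma .
\]
Because $\beta>\gamma$ for every admissible payoff matrix, $\pi^{*}_{i,t} = \text{argmax}_{a}\,v_{i}(a,s_{t}) = H$ for both players; and since these estimates remain frozen at $\beta$ and $\gamma$ thereafter (additional $(H,H)$ or $(L,L)$ rounds leave the averages unchanged), we conclude $\pi^{*}_{0,t}=\pi^{*}_{1,t}=(H,H)$ for all $t\ge T$, as required. (One could alternatively finish through the monotonicity Lemmas~2 and 3, but the exact collapse of the estimates makes that detour unnecessary.)

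The only step that needs real care is the finiteness of $T$: I would read ``plays each action at least once'' as the statement that along any realizable history every arm is eventually pulled, and then use determinism to turn this ``eventually'' into a concrete finite period. A minor caveat is that Lemma~4 is stated for path-invariant bandits, so the proposition is implicitly restricted to path-invariant deterministic bandits---the same class underlying the Markov-chain framework; if one wanted to drop path-invariance, the symmetric-play conclusion still follows by a one-line induction, since a symmetric deterministic algorithm makes both players observe literally identical action-reward histories in every period and therefore choose identical actions.
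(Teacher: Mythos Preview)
Your proof is correct and follows essentially the same route as the paper: invoke Lemma~4 at $t^{*}=0$ with the empty (path-equivalent) histories to force symmetric play, use the ``each action at least once'' assumption to get a finite $T$ with $s_{T,0},s_{T,3}>0$, and then read off $v_i(H,s_t)=\beta>\gamma=v_i(L,s_t)$ for all $t\ge T$. Your explicit remark that $s_{t,1}=s_{t,2}=0$ throughout, and your caveat about the implicit path-invariance hypothesis (together with the one-line induction that recovers symmetric play without it), are useful clarifications that the paper leaves tacit.
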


Next, consider interactions between bandits employing the UCB learning algorithm. While these learning algorithms are typically deterministic, tie-breaking rules are used when the upper confidence bounds of two or more actions are both maximal and equal, as is the case when an agent's history is empty. The tie-breaking rule could be symmetric and deterministic, and this would be covered by the general proof in Proposition \ref{prop:deterministic}. However, these tie-breaking rules could also be asymmetric or could introduce an element of randomization. In our specific setting, the tie-breaking rules for each agent could be identical, or they could be different. Both randomness and asymmetry in tie breaking create more complex learning dynamics, notably at the start of the game.  There is the possibility for initial competition amongst agents; however, we show they will ultimately settle on collusion even if tie breaking is random or otherwise asymmetric. We now prove that the agents will always converge to the collusive outcome, conditioned on their choice of a reasonable $\delta$.

\begin{proposition}\label{prop:UCB}
    When symmetric UCB bandits play an infinitely repeated Prisoner's Dilemma, there exists some finite time $T$ such that for all $t \geq T$, $\pi_{1,t}^{*} = \pi_{2,t}^{*} = H$, for $\delta<e^{\frac{-\gamma^2}{2}}$.
\end{proposition}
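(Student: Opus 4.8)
The plan is to reduce the problem, via a short analysis of the first two rounds, to the symmetric situation already controlled by Lemmas~2--4, and then pin down the asymptotics of the two value estimates, with the hypothesis on $\delta$ entering exactly once, at the crucial step. Note first that $\delta < e^{-\gamma^2/2}$ is equivalent to $c := \sqrt{2\log(1/\delta)} > \gamma$, i.e.\ the UCB exploration constant dominates the punishment payoff $\gamma$. The only nondeterminism or asymmetry in the dynamics comes from tie-breaking, which occurs only when two maximal UCB values coincide---in particular both arms tie at $+\infty$ when a history is empty. I would check directly that, whatever round-$1$ tie-breaking produces, each agent has sampled both arms by the end of round $2$, and the state after round~$2$ is either $(1,0,0,1)$ (if round~$1$ landed on the diagonal) or $(0,1,1,0)$ (if it landed off-diagonal); moreover in both cases the two agents' histories are path-equivalent after round~$2$. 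In the first case $v_i(H,s_t)\equiv\beta>\gamma\equiv v_i(L,s_t)$ for all $t\ge 2$, so $\pi^\ast_{i,t}=H$ there trivially with $T=2$; only the second case is substantive. In that case there are no further ties along the ensuing trajectory, so from round~$3$ on UCB is deterministic, and---being symmetric and path-invariant with path-equivalent histories---Lemma~4 forces the agents to play the same action every round. Hence every round is $(H,H)$ or $(L,L)$, and the off-diagonal counts stay frozen at $s_{t,1}=s_{t,2}=1$ for all $t\ge 2$.

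With the off-diagonal counts frozen, Lemma~2 (resp.\ Lemma~3) gives that $v_i(H,s_t)=\beta s_{t,0}/(s_{t,0}+1)$ is nondecreasing and $v_i(L,s_t)=(1+\gamma s_{t,3})/(1+s_{t,3})$ is nonincreasing in $t\ge 2$, so each converges; write $\ell_i:=\lim_t v_i(L,s_t)$.

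The heart of the proof is to show that $(H,H)$ is played infinitely often, so that $s_{t,0}\to\infty$ and $v_i(H,s_t)\to\beta$. Suppose not: then $(L,L)$ is played in all but finitely many rounds, so $s_{t,3}\to\infty$, which forces $\text{UCB}_i(L,s_t)\to\gamma$, while $s_{t,0}$ is eventually a constant $m^\ast$ and $\text{UCB}_i(H,s_t)$ is eventually the constant $\beta m^\ast/(m^\ast+1)+c/\sqrt{m^\ast+1}$. Playing $L$ forever would require $\gamma\ge\text{UCB}_i(H,s_t)$ in the limit; but using $c>\gamma$ this constant exceeds $\beta m^\ast/(m^\ast+1)+\gamma/\sqrt{m^\ast+1}$, which is $\ge\gamma$ by the elementary bound $m^\ast/(m^\ast+1)\ge 1-1/\sqrt{m^\ast+1}$ together with $\beta>\gamma$---a contradiction. (The sub-case $m^\ast=0$, the pure $(L,L)$-forever path, is the cleanest instance: there $\text{UCB}_i(H,s_t)\equiv c$, and $c>\gamma=\lim\text{UCB}_i(L,s_t)$ forces the switch to $H$.) This is exactly where $\delta<e^{-\gamma^2/2}$ is used, and it is tight: if $c\le\gamma$, the competitive $(L,L)$-forever trajectory is self-sustaining.

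Finally I would show $\ell_i<\beta$: if $(L,L)$ occurs infinitely often then $\ell_i=\gamma<\beta$; otherwise $(H,H)$ occurs in all but finitely many rounds, so $\text{UCB}_i(H,s_t)\to\beta$ must eventually dominate the then-constant $\text{UCB}_i(L,s_t)=\ell_i+(\text{strictly positive bonus})$, again giving $\ell_i<\beta$. Since $v_i(H,s_t)\nearrow\beta$ and $v_i(L,s_t)\searrow\ell_i<\beta$ with both monotone, for each $i$ there is a finite round past which $v_i(H,s_t)>v_i(L,s_t)$; taking $T$ to be the maximum of these two rounds (and $2$) gives $\pi^\ast_{0,t}=\pi^\ast_{1,t}=(H,H)$ for all $t\ge T$. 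I expect the main obstacle to be the infinitely-often step---ruling out the absorbing behavior in which $(H,H)$ is played only finitely often, which is precisely what the bound on $\delta$ prevents, with the delicate part being the UCB comparison in the borderline frozen-state $(m^\ast,1,1,j)$. A secondary technical point is justifying that random or asymmetric tie-breaking cannot matter beyond the first two rounds, since path-equivalence is restored by round~$2$ and ties do not recur thereafter.
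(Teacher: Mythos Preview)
Your proposal is correct and follows the paper's route closely: a case analysis of the first two rounds showing $s_2\in\{(1,0,0,1),(0,1,1,0)\}$ with path-equivalent histories, Lemma~4 to force symmetric play thereafter, and then Lemmas~2--3 plus the $\delta$-bound to handle the off-diagonal case. The substantive difference is only in how you execute the last step. The paper argues that the inequality $\mathrm{UCB}_i(H,s_t)>\gamma$ is equivalent to $\delta<\exp\bigl(-(\gamma-\tfrac{m\beta}{m+1})^2\tfrac{m+1}{2}\bigr)$ and asserts this bound is increasing in $m=s_{t,0}$, so the tightest constraint is at $m=0$; you instead prove $\mathrm{UCB}_i(H,s_t)>\gamma$ directly via the elementary estimate $\tfrac{m}{m+1}\ge 1-\tfrac{1}{\sqrt{m+1}}$ combined with $\beta>\gamma$ and $c>\gamma$, and you wrap the conclusion in a cleaner ``$(H,H)$ infinitely often'' contradiction plus a separate case split showing $\ell_i<\beta$. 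Your packaging is a bit more careful than the paper's at the endgame (the paper's phrase ``agents will play $H$ until $v_i(H,s_t)>v_i(L,s_t)>\gamma$'' leaves the all-but-finitely-many-$L$ scenario implicit, whereas your $\ell_i<\beta$ dichotomy handles it explicitly). Both proofs share the same unproven assertion that no further UCB ties occur after round~2 (the paper's ``the argmax function returns a single action and no longer relies on tie breaking rules'' and your ``there are no further ties along the ensuing trajectory''); this holds for generic parameters but is not literally true for every $(\beta,\gamma,\delta)$, so if you want to be fully rigorous you should either restrict to generic $\delta$ or note that a tie at identical histories, even if broken asymmetrically, restores path-equivalence one round later by the same mechanism as rounds~0--1.
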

We arrive at this result via a case-by-case analysis of the start of the game. We show that for $t \geq 2$, players will play the same action, so we invoke a similar argument to that used in Proposition \ref{prop:deterministic}. See Appendix \ref{prf:UCB} for a detailed proof.

\subsection{The effects of asymmetry}
Our framework provides a foundation for understanding how collusion emerges as the learned optimal action by naive bandit learners. However, certain algorithmic settings, such as asymmetric parameterization or offset starts of UCB agents do not fall neatly within the theoretical framework. Namely, these behaviors appear to be highly non-linear and possibly chaotic. Therefore we provide simulated results of these behaviors below.  In doing so, we illustrate that while the certainty of the result of Proposition \ref{prop:deterministic} relies on lock-step-action assumptions, deterministic algorithms can very often lead to learned collusion even when these assumptions do not hold.
\paragraph{Asymmetric UCB Parameters}
Extending the analysis to asymmetric UCB algorithms, let's first examine what happens when the algorithms may have different values for the exploration bonus parameter $\delta$. Results of a simulation are presented in Figure \ref{fig:asymmetric_ucb}. Parameters are chosen uniformly at random in the region $\beta \in (0,1), \gamma \in (0,\beta), \delta_1 \in (0,1), \delta_2 \in (0,\delta_1)$ with $T=10,000$. Player 1 will denote the player with the greater exploration bonus, so $\delta_1>\delta_2$ to maximize the efficiency of simulations (otherwise, there would be repetition due to symmetry). 

Figure \ref{fig:asymmetric_ucb} shows the results of this simulation.  In these runs, $41.3\%$ of games ended in collusion, across $73,000$ trials.  The results show that collusion potential varies across parameters, with no monotone guarantees in $\beta$ or $\gamma$ and few immediately identifiable patterns in how asymmetry in $(\delta_1,\delta_2)$ affects collusion potential. Here, tiles along the line $\delta_1 = \delta_2$ do not correspond to perfectly symmetric parameterization, as these tiles bin the uniformly random draws of $(\delta_1,\delta_2)$. However, the empirical results here do suggest that near symmetry influences collusion probability for certain regions of the game parameter space.  Direct empirical validation of the symmetry claim in Proposition \ref{prop:UCB}  can be found in Figure \ref{fig:asymmetric_ucb_fixed} in Appendix \ref{appendix:examples}, where $\delta$ values are selected along the grid and game parameters are fixed. 

It is worth noting that because these algorithms are deterministic, for a given $(\beta, \gamma, \delta_1, \delta_2)$, there is a deterministic outcome, collusion or competition. Tiles that show a collusion proportion in $(0,1)$ indicate that different payoff parameters within the tile result in different outcomes. These intermediately shaded tiles indicate that collusion potential is \textit{highly} sensitive to setting parameters. The graphs in Figure \ref{fig:asymmetric_ucb} demonstrate this sensitivity to game parameters; however, Figure \ref{fig:asymmetric_ucb_fixed} in Appendix \ref{appendix:examples} demonstrates this sensitivity to $(\delta_1,\delta_2)$ as well.

\begin{figure}[ht]
    \centering
    \begin{subfigure}{0.3\textwidth}
        \includegraphics[width=\textwidth]{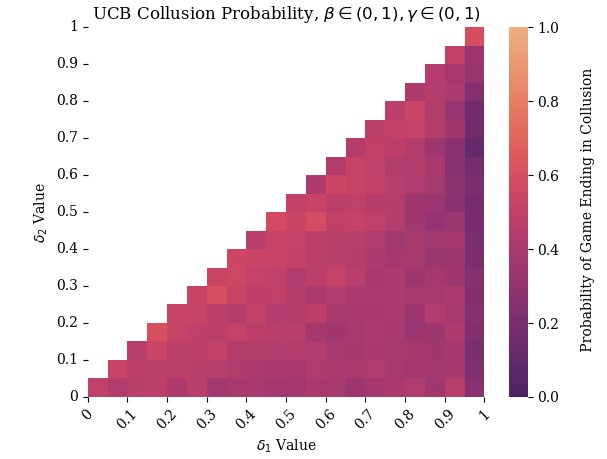}
        \subcaption{With $\beta \in (0,1), \gamma \in (0,1)$, $41.32\%$ of games end in collusion ($n = 73,000$)}
    \end{subfigure}
    \hfill
    \begin{subfigure}{0.3\textwidth}
        \includegraphics[width=\textwidth]{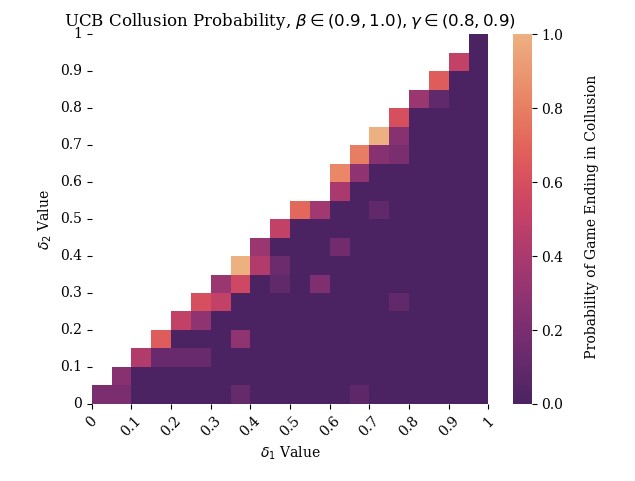}
        \subcaption{With $\beta \in [.9,1), \gamma \in [.8,.9)$, $6.12\%$ of games end in collusion ($n = 1,455$)}
    \end{subfigure}
    \hfill
    \begin{subfigure}{0.3\textwidth}
        \includegraphics[width=\textwidth]{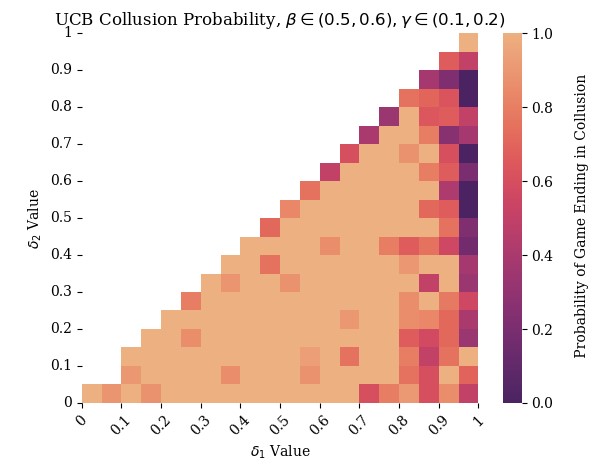}
        \subcaption{With $\beta \in [.5,.6), \gamma \in [.1,.2)$, $85.2\%$ of games end in collusion ($n = 1,443$)}
    \end{subfigure}
    \caption{Proportion of games with asymmetric UCB agents ending in a collusive equilibrium across $(\delta_1,\delta_2)$. Each graph displays a range of game parameters. The left graph displays the full range of game parameters; the center displays games with the highest range of $\beta$ and a high range of $\gamma$ values; the right displays games with a middle range of $\beta$ and the lowest range of $\gamma$.}
    \label{fig:asymmetric_ucb}
\end{figure}

\paragraph{Offset UCB Start}
With offset start periods, UCB can still collude with high probability in many regions. However, in the offset-start setting, the iterated Prisoner's Dilemma is not well defined. Specifically, there is no mapping defined from one player's actions to their reward. To address this, we construct a reasonable estimate of what the Prisoner's Dilemma reward \textit{would} be in the setting of pricing. When both agents play the same action, there is an implicit assumption that they each capture half the market willing to buy at that price. As a result, we assume that the whole market at that price purchases from the sole firm. So, we model the reward for each action in the single agent setting as double the symmetric payoff in the Prisoner's Dilemma. Concretely, we construct the mapping $f: A \rightarrow \mathbb{R}$ to define the single agent payoffs as

\[
f(a) = \begin{cases} 2\beta &\text{if} \quad a = H,\\
2\gamma &\text{if} \quad a = L
\end{cases}.
\]

To simulate, we run $71,000$ trials drawn from a symmetric $\delta \sim \text{Uniform}(0,1)$ and offset drawn uniformly from the integers $\{1,...,99\}$.  This offset corresponds to the number of rounds where Player 1 is acting and learning in the environment before Player 2 joins and the game reverts to the Prisoner's Dilemma. We again use $10,000$ rounds per trial (for Player 1), inclusive of the offset, and draw payoff parameters according to Figure \ref{fig:asymmetric_ucb}. Across these trials, we have $37.7\%$ of games ending in collusion. In Figure \ref{fig:offset_ucb}, these trials are decomposed into two graphs, one with a shortest quintile of offsets, and the other with the longest. Both graphs present a discernible, but highly non-linear pattern of collusion counts in game parameters. Surprisingly, this pattern seems relatively insensitive to offset values.

\begin{figure}[ht]
    \centering
    \begin{subfigure}{0.45\textwidth}
        \includegraphics[width=\textwidth]{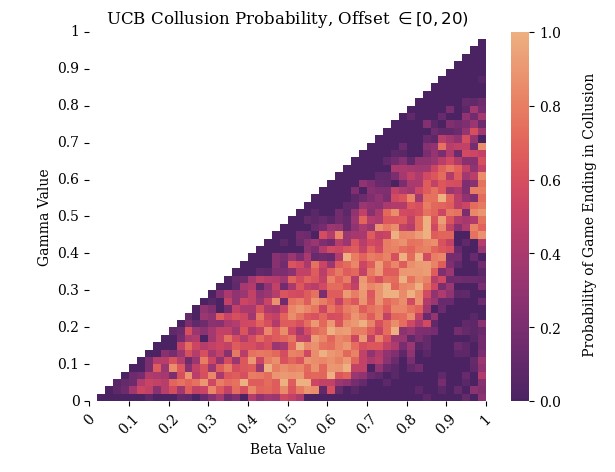}
        \subcaption{With $\text{offset} \in [0,20) $, $38.61\%$ of games end in collusion ($n = 13,599$)}
    \end{subfigure}
    \hfill
    \begin{subfigure}{0.45\textwidth}
        \includegraphics[width=\textwidth]{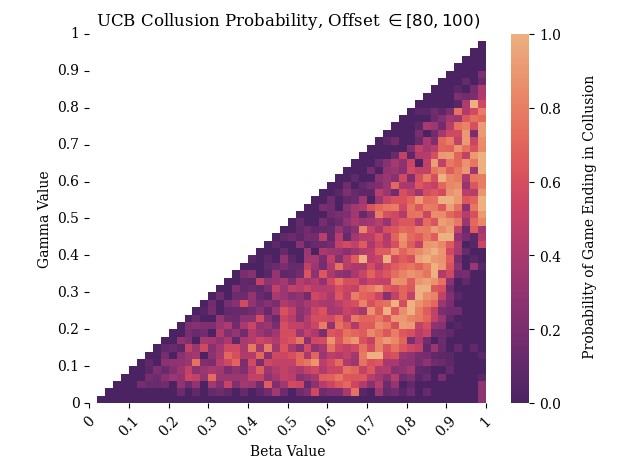}
        \subcaption{With $\text{offset} \in [80,100) $, $35.99\%$ of games end in collusion ($n = 14,433$)}
    \end{subfigure}
    \caption{Proportion of games with offset-start UCB agents ending in a collusive equilibrium across $(\beta,\gamma)$. The left graph denotes a small offset of periods where only Player 1 is interacting, while the right graph denotes a large offset.}
    \label{fig:offset_ucb}
\end{figure}

\section{Discussion}
This paper shows that even absent any information whatsoever about the strategic interaction in which they are engaged, competing algorithms may nevertheless converge to seemingly collusive outcomes. However, whether or not seemingly collusive outcomes will occur, or are likely, is closely related to the level of randomness in the algorithm's learning policy. Naive algorithmic collusion \textit{always} emerges in the long-run when competing symmetric agents both employ a deterministic bandit learning algorithm, sometimes emerges when the learning process of the algorithm is non-persistently random, and \textit{never} emerges in the long-run when both agents use a broad class of persistently random algorithms. The general connection between randomness and the emergence of seemingly collusive outcomes is aligned with prior theory~\citep{green1984noncooperative} describing how noise can undermine collusive behavior, although in our models, noise is non-stationary. The exact dynamics of the joint learning process affect how the noise distribution evolves and ultimately whether collusion is reached. We also highlight that seemingly collusive outcomes can be path-dependent. In the limit, exactly the same pair of algorithms may generate seemingly collusive outcomes in certain instances and competitive outcomes in others. 

Our results also show that \textit{synchronicity} in choices while agents are learning action values---a new outcome metric we highlight---plays an important role in determining whether agents learn to compete or to collude. We highlight that the choice of experimentation algorithm is central to the ensuing synchronicity in action plays between agents. Persistently random algorithms that jointly converge to playing static policies will always learn to compete. But, as algorithms become greedy in the limit, a requirement for random algorithms to be regret-minimizing, the result changes. For common greedy-in-the-limit algorithms, collusion emerges as a potential, and often likely, learned outcome.  At the other end of the spectrum, symmetric deterministic algorithms will always learn to collude, and the introduction of a small amount of randomness (for instance, the randomness induced by UCB tie breaking) is not sufficient to prevent collusion in the limit. 
\subsection{Extensions}
As a widely used game that captures a tradeoff between cooperation and competition, the Prisoner’s Dilemma serves as an effective model for our analysis of the emergence of collusive behavior. We believe our findings would extend to more particular models of strategic interaction that feature a comparable tradeoff. 

To illustrate, consider competing bandit learners in a setting for which the reward to agent $i$ in period $t$ is specified by a standard logit model that one might encounter in the industrial organization literature \citep{calvano2020artificial}:
\[
r_{i,t} = \left(\frac{e^{-\alpha a_{i,t}}}{\sum_{j = 1}^{n}e^{-\alpha a_{j,t}}} + \mathcal{N}(\vec{0},\sigma^2)\right)(a_{i,t}-c_i),
\]
where $\alpha$ is a demand sensitivity parameter and $c_i$ is the marginal cost of selling a good to agent $i$.

 As an example, we fix $n=2$ and $\alpha = 1$, $c_i = 1\;  \forall \; i$, and examine changes in outcomes as the variance of noise $\sigma$ varies. Players are given a lattice in $[0,10]$ with $100$ equally spaced prices as the action space $A$ and use symmetric algorithms.  This parametrization yields a Nash equilibrium of $(3,3)$. The simulations are run for $100,000$ periods with $100$ trials for each algorithm type. 

Figure \ref{fig:logit-outcomes} illustrates the learning behavior of various algorithms under this reward model and parametrization. We see the intuitions of our analytical models verified in the simulation. Epsilon-greedy variants tend to learn to charge near Nash prices, while epsilon-decay algorithms will, in general, learn to price higher. These prices tend to be slightly higher for $\eta$ closer to $1$, which confirms the intuition from our analytical models. ETC algorithms tend to learn to price higher given a shorter experimental window, again confirming the intuition of our analytical results. Finally, in the no-noise environment, UCB learns to charge the maximum price, aligned with our result that symmetric play by deterministic algorithms will lead to collusion.  However, when noise is added, UCB appears generally to price lower than ETC algorithms.  More generally, we see that noise in demand scatters the resulting learned optimal prices. These simulation results show that naive collusion can also emerge in more sophisticated models of interaction and that, again, the outcome depends starkly on the learning algorithms being employed. 
\begin{figure}[htbp]
  \centering
  \begin{subfigure}[b]{0.45\textwidth}
    \includegraphics[width=\linewidth]{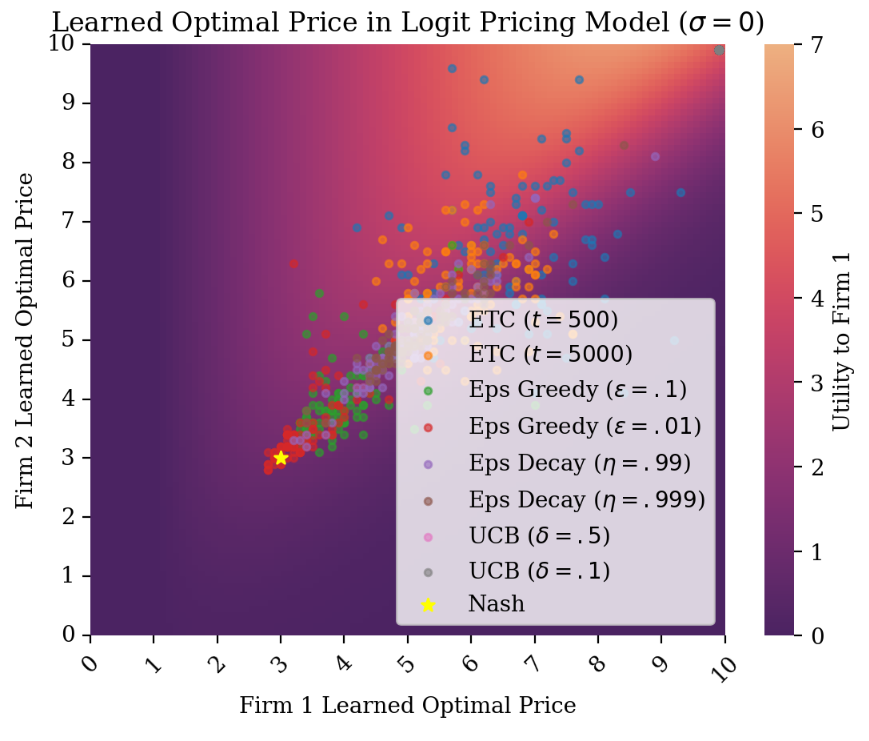}
    \caption{$\sigma = 0$}
  \end{subfigure}
  \hfill
  \begin{subfigure}[b]{0.45\textwidth}
    \includegraphics[width=\linewidth]{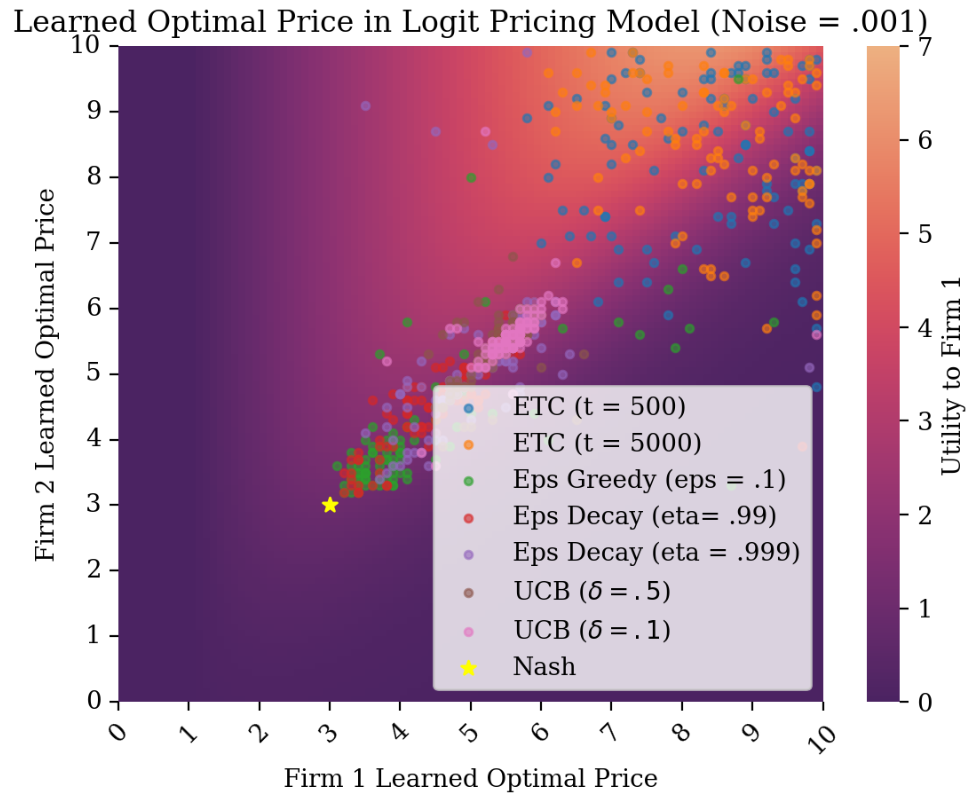}
    \caption{$\sigma = 0.001$}
  \end{subfigure}
  \caption{Points depict the learned-optimal action by both agents in logit pricing model after $100,000$ periods. The color underlay represents the expected payoff to firm $1$ across $(a_1,a_2)$, where payoffs are symmetric across firms. The choice of experimentation algorithm affects the learned-optimal payoff, both with and without noise.}
  \label{fig:logit-outcomes}
\end{figure}
\subsection{Policy implications}
Our work has several key implications for understanding and regulating algorithmic collusion in online marketplaces.

\paragraph{Algorithmic collusion can be ``naive."} We provide a thorough treatment of naive collusion, studying the dynamics of standard algorithms for addressing decision making under uncertainty (bandit learning agents). In this treatment, we highlight that firms running independent pricing experiments with various ``textbook'' experimentation algorithms can often learn to price in a manner that yields seemingly collusive outcomes. We highlight that work done by \citet{hansen21} is not simply a knife's edge result; this phenomenon can, with many algorithms and across many market settings, lead to seemingly collusive outcomes. Consequently, limiting algorithms from conditioning their choices on opponent prices or interventions that penalize algorithms that punish opponent deviations will be insufficient to completely prevent collusive outcomes from arising. Rather, our findings indicate that a more nuanced understanding of market and algorithmic settings is required to understand the potential for tacit algorithmic collusion.

\paragraph{Trial-and-error exploration can have countervailing effects.} Prior work on algorithmic collusion has sometimes focused on the \textit{increased} risk of observing collusive outcomes when there is a higher level of ``trial-and-error" exploration, owing to the algorithms potentially being given greater latitude to discover collusive outcomes and tacit coordination strategies.\footnote{For example, \citet{calvano2020artificial} note that ``No explicit communication or agreement is required: algorithms learn by trial and error how to raise prices and avoid undercutting.'' Similarly, Klein (2021) suggests that ``Q-learning agents engage in random exploration, which allows them to experiment with price increases that would not be chosen under myopic best-response behavior,'' and that through such exploration, ``firms learn that higher prices can be profitably sustained if rivals respond in kind.''} In contrast, our results show that more extensive trial-and-error exploration can sometimes have an opposing effect on the emergence of seemingly collusive outcomes. In particular, as we have illustrated in Figure \ref{fig:etc-grid}, there are settings under which a longer period of ``trial-and-error" reduces, rather than increases, the likelihood of observing seemingly collusive outcomes. At the same time, we also find that when using the epsilon-greedy algorithm with decay, a higher $\eta$ (mapping to more cumulative exploration) can actually bias synchronicity in a manner that brings agents closer to the threshold where they observe colluding to be better than competing. These countervailing findings, while also challenging those in earlier papers like \citet{banchio2023a}, suggest that policymakers must carefully consider the varied nuances in the relationship between exploration and seemingly collusive outcomes. 

\paragraph{Symmetry leads to seemingly collusive outcomes.} We show that increasing synchronicity in action plays as agents learn increases the likelihood of seemingly collusive outcomes. Symmetric, deterministic algorithms will always collude in our model, while agents that are persistently random will not learn to collude in the limit. This finding indicates that observed seemingly collusive outcomes do not necessarily imply that pricing algorithm distributors have imbued their algorithms with supracompetitive tendencies. Rather, simply by using algorithms that will tend to price similarly in the same situations, firms in the same market using a learning algorithm provided by a central distributor may generate outcomes that feature supracompetitive prices.

\paragraph{Algorithm choice affects the emergence of seemingly collusive outcomes.} This discussion of symmetry leads more generally to a discussion of algorithm choice. Our findings indicate that the exact choice and parameterization of learning algorithms shapes the emergence of seemingly collusive outcomes. With some algorithms and in some market settings, one can conclude that seemingly collusive outcomes definitively will or will not arise even from independent learning processes. In other settings, however, this outcome is non-deterministic: it cannot be said \textit{ex ante} whether the learning process will lead to collusive outcomes. Our findings may therefore provide guidance about what market and algorithmic conditions might benefit from greater regulatory scrutiny.



\subsection{Future work}

\paragraph{Increasing number and heterogeneity of agents.} Do the present results generalize to many competing agents? How does the size and composition of a pool of homogeneous or heterogeneous learning agents affect outcomes? We conjecture that an increasing number and heterogeneity of agents would decrease synchronicity of plays and thus decrease collusion potential, but it also may depend on other details of the game.  What if agents had access to customer characteristics?  Instead of colluding on high prices, they may end up splitting the market in a (seemingly) collusive manner. In the presence of such agents, it would also be interesting to see how non-algorithmic agents affect learned outcomes, similar to what is studied by \citet{wang_algorithms_2023}. Additionally, it may be insightful to study additional sampling algorithms, such as Thompson sampling. We excluded Thompson sampling algorithms from the analyses in this paper, as they are not strictly naive---Thompson sampling relies on instantiating a prior reward distribution. However, these greedy-in-the-limit algorithms enjoy practical use, and strong performance guarantees when properly initialized.

\paragraph{Function approximation algorithms.} A considerable amount of work studies collusion arising from agents choosing from a countable set of actions. In learning, these actions are treated as discrete, with no approximation of underlying demand functions across actions. While this work is important for laying the foundation of learning theory,  potentially fruitful future research could explore how learning outcomes change when actions are treated as being related by some metric.

\paragraph{Asynchronous updating.} Our theoretical model involves synchronous updating and learning. Further study of naive algorithmic collusion could benefit from studying how asynchronous action selection or updating might affect collusion potential, similar to the work of \citet{asker2022}.

\paragraph{Non-limiting behavior.} Our analysis studies the limiting behavior of learned outcomes by bandit agents. However, this limiting behavior does not say anything about the fraction of total plays where both agents charge supra-competitive prices. While short-term periods of supra-competitive pricing by firms may not pose serious harm in some markets where demand is flexible across time; in other markets, this alternative measure of collusion may matter.



\printbibliography

@article{axelrod1980effective,
  title={Effective Choice in the Prisoner's Dilemma},
  author={Axelrod, Robert},
  journal={Journal of Conflict Resolution},
  volume={24},
  number={1},
  pages={3--25},
  year={1980},
  publisher={Sage Publications Sage CA: Los Angeles, CA}
}

@inproceedings{Kang2022Raising,
  title={Raising Skepticisms on the Feasibility of Algorithmic Tacit Collusion},
  author={Kang, Songhee and Kim, Myung Ho and Kim, Kibae},
  booktitle={ICIS 2022 Proceedings},
  year={2022},
  pages={1-8},
  url={https://aisel.aisnet.org/icis2022/digit_nxt_gen/digit_nxt_gen/1/}
}

@article{Kovacic2011PlusFactors,
  author  = {Kovacic, William E. and Marshall, Robert C. and Marx, Leslie M. and White, Halbert L.},
  title   = {Plus Factors and Agreement in Antitrust Law},
  journal = {Michigan Law Review},
  year    = {2011},
  volume  = {110},
  number  = {3},
  pages   = {393--436},
  url     = {https://repository.law.umich.edu/mlr/vol110/iss3/1/}
}

@article{Bichler2025Algorithmic,
  author    = {Bichler, Martin and Durmann, Julius and Oberlechner, Matthias},
  title     = {Algorithmic Pricing and Algorithmic Collusion},
  journal   = {Business \& Information Systems Engineering},
  year      = {2025},
  volume    = {67},
  number    = {6},
  pages     = {971--979},
  month     = Dec,
  doi       = {10.1007/s12599-025-00965-z},
  url       = {https://doi.org/10.1007/s12599-025-00965-z},
  publisher = {Springer},
  note      = {Published online 29 Sep 2025}
}

@INPROCEEDINGS{4220818,
  author={Waltman, Ludo and Kaymak, Uzay},
  booktitle={2007 IEEE International Symposium on Approximate Dynamic Programming and Reinforcement Learning}, 
  title={A Theoretical Analysis of Cooperative Behavior in Multi-agent Q-learning}, 
  year={2007},
  volume={},
  number={},
  pages={84-91},
  keywords={Helium;Oligopoly;Nash equilibrium;Dynamic programming;Learning;Environmental economics;Robustness;Performance analysis;Algorithm design and analysis;Microeconomics},
  doi={10.1109/ADPRL.2007.368173}}

@inproceedings{chen2016amazon,
author = {Chen, Le and Mislove, Alan and Wilson, Christo},
title = {An Empirical Analysis of Algorithmic Pricing on Amazon Marketplace},
year = {2016},
isbn = {9781450341431},
publisher = {International World Wide Web Conferences Steering Committee},
address = {Republic and Canton of Geneva, CHE},
url = {https://doi.org/10.1145/2872427.2883089},
doi = {10.1145/2872427.2883089},
abstract = {The rise of e-commerce has unlocked practical applications for algorithmic pricing (also called dynamic pricing algorithms), where sellers set prices using computer algorithms. Travel websites and large, well known e-retailers have already adopted algorithmic pricing strategies, but the tools and techniques are now available to small-scale sellers as well.While algorithmic pricing can make merchants more competitive, it also creates new challenges. Examples have emerged of cases where competing pieces of algorithmic pricing software interacted in unexpected ways and produced unpredictable prices, as well as cases where algorithms were intentionally designed to implement price fixing. Unfortunately, the public currently lack comprehensive knowledge about the prevalence and behavior of algorithmic pricing algorithms in-the-wild.In this study, we develop a methodology for detecting algorithmic pricing, and use it empirically to analyze their prevalence and behavior on Amazon Marketplace. We gather four months of data covering all merchants selling any of 1,641 best-seller products. Using this dataset, we are able to uncover the algorithmic pricing strategies adopted by over 500 sellers. We explore the characteristics of these sellers and characterize the impact of these strategies on the dynamics of the marketplace.},
booktitle = {Proceedings of the 25th International Conference on World Wide Web},
pages = {1339--1349},
numpages = {11},
keywords = {e-commerce, dynamic pricing algorithms, algorithmic pricing},
location = {Montr\'{e}al, Qu\'{e}bec, Canada},
series = {WWW '16}
}

@article{misra2019dynamic,
  author = {Misra, Kanishka and Schwartz, Eric M. and Abernethy, Jacob},
  title = {Dynamic Online Pricing with Incomplete Information Using Multiarmed Bandit Experiments},
  journal = {Marketing Science},
  volume = {38},
  number = {2},
  pages = {226--252},
  year = {2019},
  doi = {10.1287/mksc.2018.1127}
}

@misc{wang_algorithms_2023,
	location = {Rochester, {NY}},
	title = {Algorithms, Artificial Intelligence and Simple Rule Based Pricing},
	url = {https://papers.ssrn.com/abstract=4144905},
	doi = {10.2139/ssrn.4144905},
	abstract = {Automated pricing strategies in e-commerce can be broadly categorized into two forms - simple rule-based such as undercutting the lowest price, and more sophisticated artificial intelligence ({AI}) powered algorithms, such as reinforcement learning ({RL}) algorithms. Although simple rule-based pricing remains the most widely used strategy, a few retailers have adopted pricing algorithms powered by {AI}. {RL} algorithms are particularly appealing for pricing due to their abilities to autonomously learn an optimal policy and adapt to changes in competitors' pricing strategies and market environment. Despite the common belief that {RL} algorithms hold a significant advantage over rule-based strategies, our extensive pricing experiments demonstrate that when competing against {RL} pricing algorithms, simple rule-based algorithms may result in higher prices and benefit all sellers, compared to scenarios where multiple {RL} algorithms compete against each other. To valiyear our findings, we estimate a non-sequential search structural demand model using individual-level data from a large e-commerce platform and conduct counterfactual simulations. The results show that in a real-world demand environment, simple rule-based algorithms outperform {RL} algorithms when facing other {RL} competitors. Our research sheds new light on the effectiveness of automated pricing algorithms and their interactions in competitive markets, and provides practical insights for retailers in selecting the appropriate pricing strategies.},
	number = {4144905},
	author = {Wang, Qiaochu and Huang, Yan and Singh, Param Vir and Srinivasan, Kannan},
	urlyear = {2024-11-11},
	year = {2023},
	langid = {english},
	keywords = {Algorithmic pricing, competition, reinforcement learning, rule-based pricing},
}

@misc{doj_realpage_2024,
  author       = {{U.S. Department of Justice, Office of Public Affairs}},
  title        = {Justice Department Sues RealPage for Algorithmic Pricing Scheme that Harms Millions of American Renters},
  year         = {2024},
  month        = aug,
  day          = {23},
  howpublished = {Press release},
  url          = {https://www.justice.gov/archives/opa/pr/justice-department-sues-realpage-algorithmic-pricing-scheme-harms-millions-american-renters},
  note         = {Press Release No. 24-1047. Upyeard Feb. 6, 2025. Accessed Jan. 7, 2026}
}

@inproceedings{bouneffouf_2020,
  author={Bouneffouf, Djallel and Rish, Irina and Aggarwal, Charu},
  booktitle={2020 IEEE Congress on Evolutionary Computation (CEC)}, 
  title={Survey on Applications of Multi-Armed and Contextual Bandits}, 
  year={2020},
  volume={},
  number={},
  pages={1-8},
  keywords={Heuristic algorithms;Recommender systems;Context modeling;Learning (artificial intelligence);Portfolios;Decision making;Computational modeling;component;formatting;style;styling;insert},
  doi={10.1109/CEC48606.2020.9185782}}

@article{fudenbergmaskin1986,
    author = {Fudenberg, Drew and Maskin, Eric},
    title = {The Folk Theorem in Repeated Games with Discounting or with Incomplete Information},
    journal = {Econometrica},
    volume = {54},
    number = {3},
    pages = {533-554},
    year = {1986},
    doi = {10.2307/1911307},
    URL = {https://doi.org/10.2307/1911307},
    eprint = {https://doi.org/10.2307/1911307}
}

@book{Sutton1998,
  added-at = {2019-07-13T10:11:53.000+0200},
  author = {Sutton, Richard S. and Barto, Andrew G.},
  biburl = {https://www.bibsonomy.org/bibtex/2f46601cf8b13d39d1378af0d79438b12/lanteunis},
  edition = {Second},
  interhash = {ac6b144aaec1819919a2fba9f705c852},
  intrahash = {f46601cf8b13d39d1378af0d79438b12},
  keywords = {},
  publisher = {The MIT Press},
  timestamp = {2019-07-13T10:11:53.000+0200},
  title = {Reinforcement Learning: An Introduction},
  url = {http://incompleteideas.net/book/the-book-2nd.html},
  year = {2018 }
}

@article{Dolgopolov,
  author={Dolgopolov, Arthur},
  title={{Reinforcement Learning in a Prisoner's Dilemma}},
  journal={Games and Economic Behavior},
  year=2024,
  volume={144},
  number={C},
  pages={84-103},
  month={},
  keywords={Q-learning; Stochastic stability; Evolutionary game theory; Collusion; Pricing-algorithms},
  doi={10.1016/j.geb.2024.01.004},
  abstract={I characterize the outcomes of a class of model-free reinforcement learning algorithms, such as stateless Q-learning, in a prisoner's dilemma. The behavior is studied in the limit as players stop experimenting after sufficiently exploring their options. A closed form relationship between the learning rate and game payoffs reveals whether the players will learn to cooperate or defect. The findings have implications for algorithmic collusion and also apply to asymmetric learners with different experimentation rules.},
  url={https://ideas.repec.org/a/eee/gamebe/v144y2024icp84-103.html}
}

@misc{miklos-thal_ai_2024,
  title = {AI, Algorithmic Pricing, and Collusion},
  author = {Miklós-Thal, Jeanine and Tucker, Catherine},
  journal = {CPI Antitrust Chronicle},
  year = {2024},
  month = {February},
  url = {https://www.pymnts.com/cpi_posts/ai-algorithmic-pricing-and-collusion/}
}

@article{hansen21,
    author = {Hansen, Karsten T. and Misra, Kanishka and Pai, Mallesh M.},
    title = {Frontiers: Algorithmic Collusion: Supra-competitive Prices via Independent Algorithms},
    journal = {Marketing Science},
    volume = {40},
    number = {1},
    pages = {1-12},
    year = {2021},
    doi = {10.1287/mksc.2020.1276},
    URL = {https://doi.org/10.1287/mksc.2020.1276},
    eprint = {https://doi.org/10.1287/mksc.2020.1276}
}

@article{asker2022,
Author = {Asker, John and Fershtman, Chaim and Pakes, Ariel},
Title = {Artificial Intelligence, Algorithm Design, and Pricing},
Journal = {AEA Papers and Proceedings},
Volume = {112},
year = {2022},
Month = {May},
Pages = {452–56},
DOI = {10.1257/pandp.20221059},
URL = {https://www.aeaweb.org/articles?id=10.1257/pandp.20221059}}

@inproceedings{banchio2023a,
  title={Artificial Intelligence and Spontaneous Collusion},
  author={Banchio, Martino and Mantegazza, Giacomo},
  booktitle={arXiv preprint arXiv:2202.05946},
  year={2023},
  url={https://arxiv.org/abs/2202.05946}
}

@article{waltman2008,
title = {Q-learning Agents in a Cournot Oligopoly Model},
author = {Waltman, Ludo and Kaymak, Uzay},
year = {2008},
journal = {Journal of Economic Dynamics and Control},
volume = {32},
number = {10},
pages = {3275-3293}
}

@article{brown2023competition,
  title={Competition in Pricing Algorithms},
  author={Brown, Zach Y and MacKay, Alexander},
  journal={American Economic Journal: Microeconomics},
  volume={15},
  number={2},
  pages={109--156},
  year={2023}
}

@article{veljanovski2022pricing, title={Pricing Algorithms as Collusive Devices}, author={Veljanovski, Cento}, journal={IIC-International Review of Intellectual Property and Competition Law}, volume={53}, number={5}, pages={604–622}, year={2022}, publisher={Springer} }

@article{gautier2020ai, title={AI Algorithms, Price Discrimination and Collusion: A Technological, Economic and Legal Perspective}, author={Gautier, Axel and Ittoo, Ashwin and Van Cleynenbreugel, Pieter}, journal={European Journal of Law and Economics}, volume={50}, pages={405–435}, year={2020}, publisher={Springer} }

@article{learning_to_coordinate,
Author = {Byrne, David P. and de Roos, Nicolas},
Title = {Learning to Coordinate: A Study in Retail Gasoline},
Journal = {American Economic Review},
Volume = {109},
Number = {2},
year = {2019},
Month = {February},
Pages = {591-619},
DOI = {10.1257/aer.20170116},
URL = {https://www.aeaweb.org/articles?id=10.1257/aer.20170116}}

@techreport{assad2020algorithmic,
  title        = {Algorithmic Pricing and Competition: Empirical Evidence from the German Retail Gasoline Market},
  author       = {Assad, Stephanie and Clark, Robert and Ershov, Daniel and Xu, Lei},
  year         = {2020},
  number       = {8521},
  institution  = {CESifo Working Paper},
  doi          = {10.2139/ssrn.3682021},
  url          = {https://ssrn.com/abstract=3682021},
  note         = {Available at SSRN: Working Paper No.\ 3682021},
}

@misc{bortolotti2023algorithmic, title={Algorithmic Collusion in the Housing Market}, url={https://www.promarket.org/2023/05/30/algorithmic-collusion-in-the-housing-market/}, journal={ProMarket}, publisher={Chicago Booth}, author={Bortolotti, Gabriele}, year={2023}, month={May}}

@inproceedings{trovo2015multi,
  title={Multi-armed bandit for pricing},
  author={Trovo, Francesco and Paladino, Stefano and Restelli, Marcello and Gatti, Nicola and others},
  booktitle={Proceedings of the 12th European Workshop on Reinforcement Learning},
  pages={1--9},
  year={2015}
}

@article{green1984noncooperative,
  title     = {Noncooperative Collusion under Imperfect Price Information},
  author    = {Green, Edward J. and Porter, Robert H.},
  journal   = {Econometrica},
  year      = {1984},
  volume    = {52},
  number    = {1},
  pages     = {87--100},
  publisher = {The Econometric Society},
  doi       = {10.2307/1911462},
  url       = {https://www.jstor.org/stable/1911462}
}

@article{calvano2019,
  title={Algorithmic Pricing What Implications for Competition Policy?},
  author={Calvano, Emilio and Calzolari, Giacomo and Denicolò, Vincenzo and Pastorello, Sergio},
  journal={Review of Industrial Organization},
  volume={55},
  pages={155--171},
  year={2019},
  publisher={Springer}
}

@inproceedings{hartline25,
author = {Hartline, Jason D. and Wang, Chang and Zhang, Chenhao},
title = {Regulation of Algorithmic Collusion, Refined: Testing Pessimistic Calibrated Regret},
year = {2025},
isbn = {9798400714214},
publisher = {Association for Computing Machinery},
address = {New York, NY, USA},
url = {https://doi.org/10.1145/3709025.3712217},
doi = {10.1145/3709025.3712217},
abstract = {We study the regulation of algorithmic (non-)collusion amongst sellers in dynamic imperfect price competition by auditing their data as introduced by Hartline et al. [23].We develop an auditing method that tests whether a seller's pessimistic calibrated regret is low. The pessimistic calibrated regret is the highest calibrated regret of outcomes compatible with the observed data. This method relaxes the previous requirement that a pricing algorithm must use fully-supported price distributions to be auditable. This method is at least as permissive as any auditing method that has a high probability of failing algorithmic outcomes with non-vanishing calibrated regret. Additionally, we strengthen the justification for using vanishing calibrated regret, versus vanishing best-in-hindsight regret, as the non-collusion definition, by showing that even without any side information, the pricing algorithms that only satisfy weaker vanishing best-in-hindsight regret allow an opponent to manipulate them into posting supra-competitive prices. This manipulation cannot be excluded with a non-collusion definition of vanishing best-in-hindsight regret.We motivate and interpret the approach of auditing algorithms from their data as suggesting a per se rule. However, we demonstrate that it is possible for algorithms to pass the audit by pretending to have higher costs than they actually do. For such scenarios, the rule of reason can be applied to bound the range of costs to those that are reasonable for the domain.},
booktitle = {Proceedings of the 2025 Symposium on Computer Science and Law},
pages = {108--120},
numpages = {13},
keywords = {Algorithmic collusion, Algorithmic pricing, Antitrust, Regulation of algorithms},
location = {Munich, Germany},
series = {CSLAW '25}
}

@article{calvano2020artificial,
  title={Artificial Intelligence, Algorithmic Pricing, and Collusion},
  author={Calvano, Emilio and Calzolari, Giacomo and Denicolo, Vincenzo and Pastorello, Sergio},
  journal={American Economic Review},
  volume={110},
  number={10},
  pages={3267--3297},
  year={2020},
  publisher={American Economic Association 2014 Broadway, Suite 305, Nashville, TN 37203}
}

@article{harrington2018developing,
  title={Developing Competition Law for Collusion by Autonomous Artificial Agents},
  author={Harrington, Joseph E},
  journal={Journal of Competition Law \& Economics},
  volume={14},
  number={3},
  pages={331--363},
  year={2018},
  publisher={Oxford University Press}
}

@book{lattimore2020bandit,
  title={Bandit Algorithms},
  author={Lattimore, Tor and Szepesv{\'a}ri, Csaba},
  year={2020},
  publisher={Cambridge University Press}
}

@book{casella2002statistical,
  title     = {Statistical Inference},
  author    = {Casella, George and Berger, Roger L.},
  edition   = {2},
  year      = {2002},
  publisher = {Duxbury}
}

@book{dembo1998,
  author    = {Amir Dembo and Ofer Zeitouni},
  title     = {Large Deviations Techniques and Applications},
  edition   = {2nd},
  series    = {Stochastic Modelling and Applied Probability},
  volume    = {38},
  publisher = {Springer},
  address   = {New York},
  year      = {1998},
  isbn      = {978-1-4612-5320-4},
  pages     = {1--339},
  doi       = {10.1007/978-1-4612-5320-4}
}






  


\section{Appendix}
\subsection{Additional Examples}\label{appendix:examples}
\subsubsection{Persistently random algorithms that collude}\label{ex:random_collude}
Consider a collusive algorithm $\mathcal{A}_C$. $\mathcal{A}_C$ is defined as follows:
\[
\Pr(a_{i,t}=H) = \begin{cases}
    .99 & \text{if } t \text{ mod } 2 = 0\\
    .51 & \text{if } t\text{ mod }2 = 1\\
\end{cases}.
\]
$\mathcal{A}_C$ is persistently random with $\epsilon = .01$ and also will adhere to rank-ordered play (i.e. always placing more mass on the higher-valued action) when $\pi_i^* = H$. Now, say both agents employ $\mathcal{A}_C$. We have $\mathbb{E}[r_{i,t}|a_{i,t} = H]= \beta\tfrac{(.99)^2+(.51)^2}{(.99)+(.51)}$ and $\mathbb{E}[r_{i,t}|a_{i,t} = L]= \gamma\tfrac{(.01)^2+(.49)^2}{(.01)+(.49)}+\tfrac{(.01)(.99)+(.49)(.51)}{(.01)+(.49)}$. Both agents employ this fixed algorithm, so each value converges to the expected reward. Here, with $\beta =.9, \quad \gamma = .4$, we have $V_{i,H,\infty} \approx .744 > V_{i,L,\infty} \approx .712$.
\subsubsection{Asymmetric UCB}
In Figure \ref{fig:asymmetric_ucb_fixed} we see a graph where game parameters $(\beta,\gamma)$ are fixed and $(\delta_1,\delta_2)$ varies along a lattice from $[\tfrac{1}{80},1]$ with increments of $\tfrac{1}{80}$.  Because each play path is deterministic, we only run a single trial for each, yielding $3,240$ trials in total. Each trial lasts $10,000$ rounds. Tan tiles indicate the corresponding  learned collusion and purple tiles indicate learned competition. We first note that along the diagonal in $(\delta_1,\delta_2)$, we see collusion emerge. This verifies the proof of Proposition \ref{prop:UCB}. We also note that collusion potential is highly sensitive to exploration parameters, with no clearly discernible pattern. 
\begin{figure}[h!]
    \centering
    \begin{subfigure}{0.3\textwidth}
        \includegraphics[width=\textwidth]{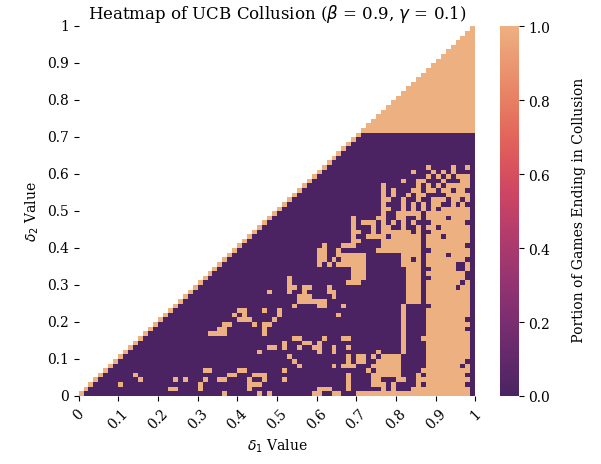}
        \subcaption{Heatmap of collusion in $(\delta_1,\delta_2)$ for $\beta = .9, \gamma = .1$. }
    \end{subfigure}
    \hfill
    \begin{subfigure}{0.3\textwidth}
        \includegraphics[width=\textwidth]{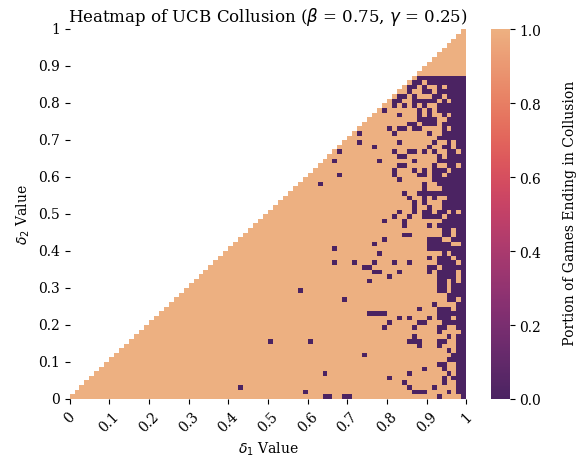}
        \subcaption{Heatmap of collusion in $(\delta_1,\delta_2)$ for $\beta = .75, \gamma = .25$. }
    \end{subfigure}
    \hfill
    \begin{subfigure}{0.3\textwidth}
        \includegraphics[width=\textwidth]{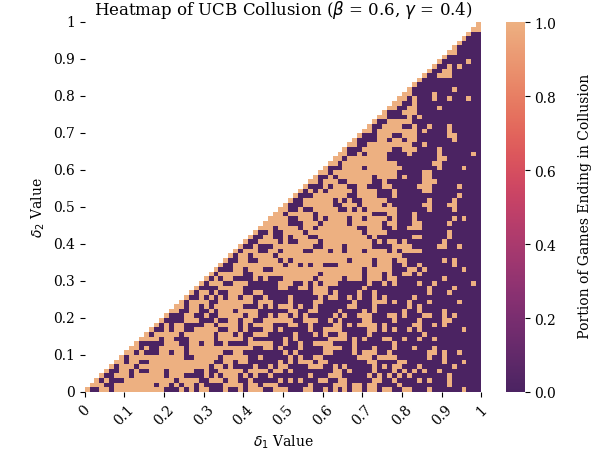}
        \subcaption{Heatmap of collusion in $(\delta_1,\delta_2)$ for $\beta = .6, \gamma = .4$. }
    \end{subfigure}
    \caption{Demonstration of $(\delta_1,\delta_2)$ values for fixed $(\beta,\gamma)$ where games end in collusion.}
    \label{fig:asymmetric_ucb_fixed}
\end{figure}

\subsection{Supplemental Proofs}
\label{appendix:proofs}
\subsubsection{Proof of Lemma \ref{lem:markov}}\label{prf:markov}
\vspace{0.5\baselineskip}
\noindent
\begin{proof}
There exists a bijective mapping from any outcome vector $o \in O$ to a deterministic reward vector $r$. Thus, for each index $i$ in the state $s_t$, there exists a mapping from $i$ to an outcome $o$ and a mapping from $o$ to some reward vector $r$. 

Start with zero-length reconstructed action histories $\alpha_{i,a}'$ for all actions $a \in A$ and a reconstructed reward history $\rho_i'$  across each player ${i}$. For each index $j$ in $[1,2^n]$, find the mapping from $j$ to $o$ and from $o$ to $r$. Now, for each player $i$, append $s_{t,j}$ copies of $r_i$ to  $\rho_i^{'}$. Additionally, append $s_{t,j}$ copies of 1 to the action vector $\alpha_{i,a'}'$ if $o_i = a'$ and $s_{t,j}$ copies of $0$ to  $\alpha_{i,a}'$ for $a \neq a'$. When all elements in $s_t$ have been iterated over, then for each player, the mapping of action played to reward will be preserved in $\alpha_{i,a}'$ and $\rho_i'$ by aligning indices for all players across all actions for all rounds until $t$. Thus $\alpha_{i,a} \rho_i = \alpha_{i,a}' \rho_i'$ and $\alpha_{i,a} \vec{1} =  \alpha_{i,a}' \vec{1} $.

Each bandit $i$ is imbued with an algorithm $\mathcal{A}_i$ that takes as an input the history $\mathcal{H}_i$ and outputs some probability distribution $\pi_i^t$ across actions. This strategy determines the distribution of actions played by each agent $i$. Because the probability of all actions $a \in A$ played by all agents $i$ is well defined by $s_t$, then these probabilities can be propagated to a distribution across all outcomes in $O$. With $s_t$  defined by the count of each outcome, the distribution over outcomes in $O$ has a bijective mapping to the distribution over states $s_{t+1}$. Thus, the current state alone is needed to determine the transition probabilities over successor states. Therefore, the state is Markov. 
\end{proof}
\subsubsection{Proof of Proposition \ref{prop:no-covariance}}\label{prf:no-covariance}
\vspace{0.5\baselineskip}
\noindent
\begin{proof}
In order for Player $i$ to observe $V_{i,H}(s_t)\geq V_{i,L}(s_t)$, $\beta \xi_{i,t,H} + (1-\gamma)\xi_{i,t,L} \geq 1$.  We have as $\beta \rightarrow 1, \gamma \rightarrow 0$, $\beta \xi_{i,t,H} + (1-\gamma)\xi_{i,t,L} \uparrow \xi_{i,t,H} +\xi_{i,t,L}$. As $\beta$ and $\gamma$ exist on open intervals,  we get the minimal conditions on $\xi$ that could lead to $V_{i,H}(s_t)\geq V_{i,L}(s_t)$ for any $\beta, \gamma$. For convenience, let $\tilde{a}_{i,t} =\mathds{1}
\{a_{i,t}=H\} $:
\[
\frac{\sum_{k=0}^t\tilde{a}_{i,k}\tilde{a}_{-i,k}}{\sum_{k=0}^t\tilde{a}_{i,k}} +\frac{\sum_{k=0}^t(1-\tilde{a}_{i,k})(1-\tilde{a}_{-i,k})}{\sum_{k=0}^t(1-\tilde{a}_{i,k})} > 1.
\]
      
      Rearranging, we arrive at:
      \[
t\sum_{k=0}^{t}\tilde{a}_{i,t}\tilde{a}_{-i,t}>\left(\sum_{k=0}^{t}\tilde{a}_{i,t}\right)\left(\sum_{k=0}^{t}\tilde{a}_{-i,t}\right).
      \]
Let $\bar{a}_{i} = \tfrac{1}{t}\sum_{k=0}^{t}\tilde{a}_{i,t}$. Dividing through by $t^2$ and rearranging, we get:
\begin{align*}
    \frac{1}{t}\sum_{k=0}^{t}\tilde{a}_{i,t}\tilde{a}_{-i,t}-\bar{a}_{i}\bar{a}_{-i}=: Cov(a_{i},a_{-i}) > 0.
\end{align*} 
\end{proof}

\subsubsection{Proof of Proposition \ref{prop:epsilon-greedy}}\label{prf:epsilon-greedy}
\vspace{0.5\baselineskip}
\noindent
\begin{proof}
With epsilon-greedy, there are four possible regimes that dictate the behavior of agents. We define a regime as $\Pi^* = (\pi^*_1, \pi^*_2)$. We show that in each of these possible regimes, value estimates are pushed away from those that induce the collusive regime $\Pi^* = (H,H)$. We analyze the limit behavior of the value estimates of Player $1$; however the results generalize symmetrically to Player $2$.

Assume agents are in the collusive regime $\Pi^{*} = (H,H)$. Let $\tilde{a}_{i,t}$ be the Bernoulli random variable corresponding to $1$ if $a_{i,t} = H$ and $0$ if $a_{i,t} = L$. With this construction, $\text{P}(\tilde{a}_{i,t} = 1) = 1-\frac{\epsilon_i}{2}$ at any time $t$ while agents are in the collusive regime. Now, from time $t$ to some time $t + t' > t$,  we have the following value estimates for Player 1 defined by these Bernoulli trials:

\begin{align*}
    V_{1,H}(s_{t + t'}) &=  \frac{\beta (s_{t,1}+\sum_{k=1}^{t'}\tilde{a}_{1,k}\tilde{a}_{2,k})}{s_{t,1}+s_{t,2}+\sum_{k=1}^{t'}\tilde{a}_{1,k}},\\
    V_{1,L}(s_{t+t'}) &= \frac{s_{t,2}+\sum_{k=1}^{t'}(\tilde{a}_{1,k})(1-\tilde{a}_{1,k})}{s_{t,2}+s_{t,3} + \sum_{i=1}^{t'}(1-\tilde{a}_{1,k})}\\
    &+\frac{\gamma( s_{t,3} + \sum_{i=1}^{t'}(1-\tilde{a}_{1,k})(1-\tilde{a}_{1,k}))}{s_{t,2}+s_{t,3} + \sum_{i=1}^{t'}(1-\tilde{a}_{1,k})}.
\end{align*}

As $t' \rightarrow \infty$, $V_{1,H}(s_{t + t'}) = \beta(1-\frac{\epsilon_2}{2})$ and, $V_{1,L}(s_{t + t'}) = 1-\frac{\epsilon_2}{2} + \gamma\frac{\epsilon_2}{2}$ by the law of large numbers. Thus, as $t' \rightarrow \infty$, $V_{1,H}(s_{t + t'}) < V_{1,L}(s_{t + t'})$ if players are in $\Pi^{*} = (H,H)$. So, Player $1$ will eventually change its target policy to $L$, pushing the agents out of the collusive regime.

Now, if agents are in $\Pi^{*} = (L,H)$, we observe that Player $1$ will in expectation stay in $\pi^{*}_1 = L$. In this regime, we now have $\text{P}(\tilde{a}_{1,t} = 1) = \frac{\epsilon_1}{2}$. We still have the same limit value as $t' \rightarrow \infty$, $V_{1,H}(s_{t + t'}) = \beta(1-\frac{\epsilon_2}{2})$ and $V_{1,L}(s_{t + t'}) = 1-\frac{\epsilon_2}{2} + \gamma\frac{\epsilon_2}{2}$. In the competitive regime, $\Pi^{*} = (L,L)$, $\tilde{a}_{1,t}$ is now defined such that $\text{P}(\tilde{a}_{1,t} = 1) = \frac{\epsilon_1}{2}$. This leads to the limit as $t' \rightarrow \infty$ of $V_{1,H}(s_{t + t'}) = \beta\frac{\epsilon_2}{2}$ and $V_{1,L}(s_{t + t'}) = \frac{\epsilon_2}{2} + \gamma(1-\frac{\epsilon_2}{2})$. Finally, for, $\Pi^{*} = (H,L)$,  $\text{P}(\tilde{a}_{1,t} = 1) = 1 - \frac{\epsilon_1}{2}$ and $\text{P}(\tilde{a}_{1,t} = 1) = \frac{\epsilon_1}{2}$. This leads to the same limit as that of $\Pi^* = (L,L)$. In all three regimes, as $t' \rightarrow \infty$, $V_{1,H}(s_{t + t'}) < V_{1,L}(s_{t + t'})$. Thus, the value estimates for Player $1$ and Player $2$, by the symmetry of this argument, will be pushed away from those that induce $\pi^{*} = (H,H)$, regardless of the current regime. Thus, all regimes but $(L,L)$ are transient, leading agents to converge to play of $\pi_{i,\infty}(H) = \tfrac{\epsilon_i}{2}$.

\end{proof}
\subsubsection{Proof of Proposition \ref{prop:convergence-competition}}\label{prf:convergence-competition}
\vspace{0.5\baselineskip}
\noindent
\begin{proof}
By persistent randomness, $N_{i,a}(t) \rightarrow \infty$. Thus, $\bar{s} = \tfrac{s_t}{t} \rightarrow (p_1^\star p_2^\star, p_1^\star (1-p_2^\star), (1-p_1^\star) p_2^\star, (1-p_1^\star)(1- p_2^\star))$.  So, we have 
    \begin{align*}
        V_{i,H}(\bar{s}) &= \beta p_{-i}^\star,\\
        V_{i,L}(\bar{s})& = \gamma+ (1-\gamma)p_{-i}^\star.
    \end{align*}
    As we multiply $\bar{s}$ by $t$, we have $V_{i,a}(s_\infty) = V_{i,a}(\bar{s})$. By the game payoff construction $1>\beta>\gamma>0$, we have that $V_{i,L}(\bar{s})>V_{i,H}(\bar{s})$, which yields the result.
\end{proof}
\subsubsection{Proof of Proposition \ref{prop:etc-exp}}\label{prf:etc-exp}
\vspace{0.5\baselineskip}
\noindent

We show this bound holds via concentration inequalities.
\begin{proof}
First, we define the set of events where $s_{t^*}$ has sufficient coverage across all dimensions to induce a small Lipschitz constant on $\rho_{i,a}$.  To do this, define ``good" events $G = \{s \in S_{t^*}, s_1+s_2 \geq \frac{t^*}{8}, s_1+s_3 \geq \frac{t^*}{8}, s_3+s_4 \geq \frac{t^*}{8}, s_2+s_4 \geq \frac{t^*}{8}\}$ . Now, we will bound the probability that $s_{t^*}$ falls outside of $G$.

Now, the sum of any two  $s_i + s_j \sim Binomial(\frac{1}{2}, t^*)$. Following the Chernoff bound for binomials, we have $P(s_i + s_j \leq (1-\delta)\frac{t^*}{2})\leq \text{exp}(-\frac{\delta^2t^*}{2})$. Setting $\delta = 3/4$, we have $P(s_i + s_j \leq \frac{t^*}{8})\leq \text{exp}(-\frac{9t^*}{32})$ for a single $i \neq j$. We then use a union bound to bound each of these $4$ conditions so that $\mathbb{P}(s_{t^*} \notin G) \leq 4 \text{exp}(-\frac{9t^*}{32})$.

Now, we can condition on $s_{t^*}$ having sufficient coverage to find a Lipschitz constant that decreases in $t^*$. We will use this constant to apply McDiarmid's inequality on $\Pr(s_{t*} \in C|\beta, \gamma, t^*, s_{t^*}\in G)$.

Let $f(x_1,x_2)=\frac{x_1}{x_1+x_2}$ be restricted to the domain where $x_1+x_2 \geq \frac{t^*}{8}$. Now, $\partial{x_1} f = \frac{x_2}{(x_1+x_2)^2}$ and $\partial{x_2}f = \frac{-x_1}{(x_1+x_2)^2}$. These partial derivatives imply $|\partial f|\leq\frac{1}{x_1+x_2}\leq \frac{8}{t^*}$, giving us a Lipschitz constant of $\frac{8}{t^*}$. This bounds how much $\rho_i$ can change, given that the denominator has sufficient coverage.

Let $g_i = \beta \rho_{i}(H) + (1-\gamma)\rho_i(L)$, where $C$ is defined by $g_i > 1 \;\forall \;i$. We have $\mathbb{E}[g_i] = \frac{1+\beta-\gamma}{2}$, giving a ``gap" in $g_i$ of $\Delta = \frac{1-\beta+\gamma}{2}$ by which all $g_i$ must deviate from their mean to reach the collusive state. We will use McDiarmid's inequality to bound the probability that $g_i$ deviates from its mean by greater than $\Delta$. Generally, we construct $g_1(o_1,...,o_t,...,o_{t^*}) = \beta f(\sum_{t=1}^{t^*}o_{i,1},\sum_{t=1}^{t^*}o_{i,2})-(1-\gamma)f(\sum_{t=1}^{t^*}o_{i,4},\sum_{t=1}^{t^*}o_{i,3})$ with $o_t \in \mathbb{R}^4$. Changing $o_t$ to some other outcome $o_t'$ will result in a bounded change to $g_1$ such that $|g_1(o_1,...,o_t,...,o_{t^*})-g_1(o_1,...,o_t',...,o_{t^*})| \leq\frac{8(1 + \beta - \gamma)}{t^*}$ due to the Lipschitz continuity of $f$. The same bound holds for $g_2$, which can be shown by changing the inputs to $f$. Plugging this in to McDiarmid's inequality, we get $\mathbb{P}(g_i-\mathbb{E}[g_i] \geq \Delta)\leq \text{exp}(\frac{-2\Delta^2}{t^*(\frac{8(1 + \beta - \gamma)}{t^*})^2}) =\text{exp}(-(\frac{(1-\beta+\gamma)^2}{128(1 + \beta - \gamma)^2})t^*) $. We now use this to upper bound this probability, conditioned  on a state being ``good
by $\mathbb{P}(s_{t*} \in C|\beta, \gamma, t^*, s_{t^*}\in G) = \mathbb{P}((g_1-\mathbb{E}[g_1] \geq \Delta) \land (g_2-\mathbb{E}[g_2] \geq \Delta) )\leq \mathbb{P}(g_i-\mathbb{E}[g_i] \geq \Delta) \leq \text{exp}(-(\frac{(1-\beta+\gamma)^2}{128(1 + \beta - \gamma)^2})t^*) $. Finally, we have $\mathbb{P}(s_{t*} \in C|\beta, \gamma, t^*) \leq \mathbb{P}(s_{t*} \in C|\beta, \gamma, t^*, s_{t^*}\in G) + \mathbb{P}(s_{t^*} \notin G) \leq \text{exp}(-(\frac{(1-\beta+\gamma)^2}{128(1 + \beta - \gamma)^2})t^*)+4 \text{exp}(-\frac{9t^*}{32})\leq c_1e^{-t^*c_2}$. 
\end{proof}

Now, we approximate this process via a Gaussian approximation of the multinomial and the delta method. By the central limit theorem, the multinomial converges in distribution to a multivariate Gaussian. We approximate $\sqrt{t^*}(s_{t^*} -\mu t^*)\sim \mathcal{N}(0, t^*\Sigma)$, via the Central Limit Theorem with $\mu_i =  p_i =.25$ and $\Sigma_{i,i} = p_i(1-p_i) = \frac{3}{16}$ and $\Sigma_{i,j} = -p_ip_j = -\frac{1}{16}$. This approximation is standard, as described in \citet{casella2002statistical}.

The delta method states $\sqrt{t^*}(g(s_{t^*} )-g(\mu)) \xrightarrow[]{d} \mathcal{N}(0, J(\mu)\Sigma J(\mu)^T)$ \citep{casella2002statistical}. Now, let  $g(s) = h(\xi(s))$, with $h$ constructed in the following way:
\[
  h(\xi) = 
  \begin{pmatrix}
    \beta \xi_{1,H} + (1-\gamma)\xi_{1,L}\\
    \beta \xi_{2,H} + (1-\gamma)\xi_{2,L}
  \end{pmatrix}
\]
This gives us the following Jacobian with respect to state $s$:

\begin{equation*}
\begin{gathered}
  J(s) = \frac{\partial g_i }{\partial s_{i}}=  \\
  \left(
  \begin{smallmatrix}
    \beta\frac{s_{2}}{(s_{1}+s_{2})^2}&-\beta\frac{s_{1}}{(s_{1}+s_{2})^2}&-(1 - \gamma )\frac{s_{4}}{(s_{3}+s_{4})^2}& (1 - \gamma)\frac{s_{3}}{(s_{3}+s_{4})}   \\
    \beta\frac{s_{3}}{(s_{1}+s_{3})^2}&-(1-\gamma)\frac{s_{4}}{(s_{2}+s_{4})^2}&-\beta\frac{s_{1}}{(s_{1}+s_{3})^2}& (1 - \gamma)\frac{s_{2}}{(s_{2}+s_{4})^2}  
  \end{smallmatrix}
  \right).
\end{gathered}
\end{equation*}

When we evaluate this at $s_{t^*} = \mu$, we get:
\[
  J(\mu)=
  \begin{pmatrix}
\beta &-\beta&-(1-\gamma )& (1 - \gamma)   \\
\beta&-(1-\gamma )& -\beta&(1 - \gamma)
  \end{pmatrix}
\]
This gives us:
\[
  J(\mu)\Sigma J(\mu)^T=
  \begin{pmatrix}
\frac{(1-\gamma)^2+\beta^2}{2}&   \frac{(1+\beta-\gamma)^2}{4}\\
\frac{(1+\beta-\gamma)^2}{4}&\frac{(1-\gamma)^2+\beta^2}{2}
  \end{pmatrix}
\]
We have that the $g_i(\mu)=\frac{1}{2}(\beta+1-\gamma)$.  Now, dividing by $\sqrt{t^*}$  and adding $g(\mu)$, we arrive at 
\[
(g_{1}(s),g_{2}(s))
  \;\sim\;
  \mathcal N\!\Bigl(
      (m,m),
      \begin{pmatrix}
        \sigma^{2} & \psi\,\sigma^{2}\\
        \psi\,\sigma^{2} & \sigma^{2}
      \end{pmatrix}
    \Bigr),
\]
with
\begin{equation*}
\begin{gathered}
m=\tfrac12(\beta+1-\gamma),\\
\sigma^{2}=\frac{\beta^{2}+(1-\gamma)^{2}}{2t^*},\\
\psi=\frac{(1+\beta-\gamma)^{2}}{2\,[\beta^{2}+(1-\gamma)^{2}]}.
\end{gathered}
\end{equation*}

Now, we find $\Pr(g_1>1, g_2 > 1)$ by defining the one–sided \(z\)-score for some $Z_i \sim \mathcal{N}(0,1)$
\[
z \;=\; \frac{1-m}{\sigma}
       \;=\;
       \sqrt{t^*}\,
       \frac{1-\beta+\gamma}
            {\sqrt{2\,[\beta^{2}+(1-\gamma)^{2}]}}
       \;>\;0 .
\]

We now draw on large-deviation theory to get a clean approximate expression of $\Pr(Z_1 > z, Z_2 > z)$.  \citet{dembo1998} gives the following:
\[
-\lim_{t\to\infty}\frac{1}{t^{2}}\log\Pr\!\bigl((Z_1,Z_2)\in tB\bigr)
   =\inf_{x\in B}I(x),
\]
with rate
\[
I(x)= \frac{1}{2}x^T\Sigma^{-1} x = \frac{x_{1}^{2}-2\psi x_{1}x_{2}+x_{2}^{2}}{2(1-\psi^{2})}.
\]
We rearrange to get our approximation of interest:
\[
\Pr((Z_1,Z_2) \in tB) \approx \text{exp}\{-t^2 \inf_{x\in B}I(x)\}.
\]

Setting $B=[1,\infty)^{2}$,  and using convexity of $I$,
the infimum is attained at the corner $(1,1)$. Then, scaling by $t=z$ yields
$
\inf_{x\in A_{z}}I(x)=\frac{1}{1+\psi}.
$
This gives us
\[
\Pr(Z_1 > z, Z_2>z) \approx \text{exp}\{-\frac{z^2}{1+\psi} \}.
\]
We now apply this result to $(g_{1}(s),g_{2}(s))$. 
Substituting \(z^{2}\) and \(\rho\) into gives the closed-form exponential approximation
\begin{equation}
\label{eq:joint-tail-proxy}
\Pr\!\bigl(g_{1}(s)>1,\;g_{2}(s)>1\bigr)
   \;\approx\;
   \exp\!\{-t^*\,C(\beta,\gamma)\},
\end{equation}
with

\[
C(\beta,\gamma)=
\frac{(1-\beta+\gamma)^{2}}
     {\,2\bigl[\beta^{2}+(1-\gamma)^{2}\bigr]
      +(1+\beta-\gamma)^{2}}.
\]

\subsubsection{Proof of Proposition \ref{prop:ed-beta}}\label{prf:ed-beta}
\vspace{0.5\baselineskip}
\noindent
\begin{proof}
    
The approximation is predicated on agents beginning in the regime where $\Pi^*= (L,L)$ (i.e. agents both observe playing $L$ to be optimal) for early play, which we now justify.

For $\eta$ near $1$ and at small $t$, agents explore approximately uniformly. In this early gameplay, $\pi_i(a) \approx \frac{1}{2}$ for $a \in \{H,L\}$, regardless of which action is greedy for each agent. This will lead to approximately the same distribution of $\xi_{i,a}$ as in the case of ETC. Assuming $\pi_i(a) = \frac{1}{2}$, this distribution concentrates around $\xi_{i,a}=\frac{1}{2}$ with variance decreasing on the order of $\frac{1}{t}$.

When $\pi_i(H)-\pi_i(L)$ starts to diverge, play paths will vary based on the regime $\Pi^*$. However, we know that $\Pi^*=(L,L)$ is exceedingly likely in early rounds given that $\xi_{i,a} = \frac{1}{2}$ is in the regime $\Pi^*=(L,L)$. In words, agents start off with near-uniform random sampling, ensuring that probability of agents synchronizing on action plays is roughly $.5$ for each action. This leads to near-zero correlation in action plays, where agents would observe $L$ to be the higher-valued action. Going forward, we restrict our analysis to Player 1 (i.e., $\Pr(\bar{s} \in C) \approx \Pr(V_{1,H}(\bar{s}) > V_{1,L}(\bar{s}))$. This is a safe approximation as $\bar{s}_2 - \bar{s}_3$ will deviate by only small amounts in the limit with play in $\Pi^* = (L,L)$ as $\bar{s}$ would converge to be degenerate on the index corresponding to the outcome $(L,L)$. 

In $\Pi^* = (L,L)$, we have $\xi_{1,L} = \frac{s_4}{s_3 + s_4}\rightarrow  \frac{\sum_{t=0}^{\infty}(1-\frac{\eta^t}{2})^2}{\sum_{t=0}^{\infty}[(1-\frac{\eta^t}{2})^2 +(1-\frac{\eta^t}{2})\frac{\eta^t}{2}]} =1$. This occurs as, if agents stayed in $\Pi^* = (L,L)$, the fraction of plays on arm $L$ would converges strictly to play of $L$ such that  $o_\infty = (L,L)$ almost surely. Now, $\xi_{1,H}$ does not have the same convergence guarantee. $\xi_{1,H}$ depends on play early in gameplay as ultimately, if play were to stay in $\Pi^*=(L,L)$, the cumulative probability of action plays would be concentrated in early periods due to the geometrically decaying exploration rate. Instead, we can model this value as $t \rightarrow \infty$ as a Beta distribution.  Below, we show how this approximation is devised.

Define rates
\begin{align*}
\lambda_1
   \;=\;\mathbb{E}[\bar{s}_1]
   &=\sum_{t=0}^{\infty}\frac{\eta^{2t}}{4}
     =\frac{1}{4(1-\eta^{2})},\\
\lambda_2
   \;=\;\mathbb{E}[\bar{s}_2]
   &=\sum_{t=0}^{\infty}\Bigl((1-\tfrac{\eta^t}{2})\tfrac{\eta^{t}}{2}\Bigr)\\
    & =\frac{1}{2(1-\eta)}-\lambda_1=\frac{1+2\eta}{4(1-\eta^2)}.
\end{align*}

The count of any type of outcome $o_t$ in this regime could be modeled as a Poisson-Binomial distribution, which is notoriously tricky to work with. However, since the events $(o_t=(H,H))$ and $(o_t=(H,L))$ are rare (with rates decreasing to $0$) and independent across $t$, the Poisson limit theorem gives counts of $s_1, s_2$ as the approximation
\[
s_1\sim\text{Poisson}\!\bigl(\lambda_1\bigr),
\qquad
s_2\sim\text{Poisson}\!\bigl(\lambda_2\bigr).
\]

Given $n=s_1+s_2$, we now have the conditional probability
\[
s_1\,\bigl|\,n \;\sim\; \text{Binomial}\!\Bigl(n,\,
           p=\frac{\lambda_1}{\lambda_1+\lambda_2}\Bigr).
\]

Marginalizing over $n$ yields the continuous approximation
\[
\xi_{1,H}(s)\;\sim\;
\text{Beta}\bigl(\lambda_1,\,\lambda_2\bigr),
\]
valid whenever $\lambda_1+\lambda_2$ is not too small, which happens as $\eta$ is close to $1$.

Now, to evaluate $\Pr(\bar{s}_\infty \in C |\beta, \gamma, \eta) $, we are interested in when
\[
\beta \xi_{1,H} + (1-\gamma) \xi_{1,L} > 1,
\]
which, with $t \rightarrow \infty$ ensuring $\xi_{1,L} = 1$ , gives us the requirement  $\xi_{1,H} > \frac{\gamma}{\beta}$ for Player 1 to observe $V_{1,H}(s_\infty) > V_{1,L}(s_\infty)$. 

Combining the steps above, we obtain the compact formula
\[
\Pr(\xi_{1,H}(s_\infty)>\tfrac{\gamma}{\beta})
\;\approx\;
1-\mathrm{B}\!\bigl(\tfrac{\gamma}{\beta};\;
                    \lambda_1,\lambda_2\bigr)    
\]
with
\[
\lambda_1=\frac{1}{4(1-\eta^{2})}, \qquad 
\lambda_2=\frac{1+2\eta}{4(1-\eta^2)}.
\]
Here, $\mathrm{B}(x; \cdot)$ denotes the cumulative distribution
function of a Beta random variable with the
parameters $\lambda_1,\lambda_2$. 
\end{proof}
\subsubsection{Proof of Lemma \ref{lem:monotonicity-H}}\label{prf:monotonicity-H}
\vspace{0.5\baselineskip}
\noindent
\begin{proof}
Observe that if $o_t = (H,H)$, then in the next round, we have $s_{t+1,1} = s_{t,1} + 1$. Also, note that no entry in $s_t$ may be negative. From the construction of value estimates $V$, we can perform simple algebraic manipulation to arrive at a proof of the claim.

We know $V_{i,H}(s_t)$ to be a linear combination of $0$ and $\beta$, such that $0\leq V_{i,H}(s_t)\leq \beta$. Without loss of generality, we prove the results for Player $1$. In case 1, assume that $s_{t,2} > 0$. Then, via simple algebraic manipulation, we arrive at the conclusion. 

    \begin{align*}
        s_{t,2}&>0\\
        \frac{s_{t+1,1}\beta}{s_{t+1,1} + s_{t,2}} &> \frac{s_{t,1}\beta}{s_{t,1} + s_{t,2}}\\
        V_{1,H}(s_{t+1}) &> V_{1,H}(s_{t})
    \end{align*}

In case 2, we assume $s_{t,2} = 0$ and observe the following.

    \begin{align*}
        s_{t,2}&=0\\
        \frac{s_{t+1,1}\beta}{s_{t+1,1} + s_{t,2}} &= \frac{s_{t,1}\beta}{s_{t,1} + s_{t,2}}\\
        V_{1,H}(s_{t+1}) &= V_{1,H}(s_t) = \beta
    \end{align*}

The proof for both cases above can be applied to a general Player $i$ in an $n$-player Prisoner's Dilemma by substituting the sum across all state values corresponding to outcomes with $o_{t,i} = H$ and $o_t \neq \vec{H}$ for $s_{t,2}$.
\end{proof}
\subsubsection{Proof of Lemma \ref{lem:monotonicity-L}}\label{prf:monotonicity-L}
\vspace{0.5\baselineskip}
\noindent
\begin{proof}

We follow a near identical procedure to the proof of Lemma \ref{lem:monotonicity-H} to this proof, so we omit the steps for readability.

Observe that if $o_t = (L,L)$, then in the next round, we have $s_{t+1,4} = s_{t,4} + 1$. 

We know $V_{i,L}(s_t)$ to be a linear combination of $\gamma$ and $1$, such that $\gamma \leq V_{i,L}(s_t)\leq 1$. Without loss of generality, we prove the results for Player $1$. In case 1, assume that $s_{t,3} > 0$ and in case 2, we assume $s_{t,3} = 0$. By algebraic manipulation, as in Lemma 2, we arrive at the result

\end{proof}
\subsubsection{Proof of Lemma \ref{lem:subsequent}}
\vspace{0.5\baselineskip}
\noindent
\begin{proof}

Let $\mathcal{H}_{i,t}$ and $\mathcal{H}_{j,t}$ be the path-equivalent histories, and let $\mathcal{H}_{t}$ be some path-equivalent ordering of both $\mathcal{H}_{i,t}$ and $\mathcal{H}_{j,t}$. Denote the player's symmetric algorithm as $ \mathcal{A}_i$=  $\mathcal{A}_1 = \mathcal{A}_2$. Since the algorithm is deterministic, at any time $t$, we have for some action $a$, 
\begin{align}
\Pr( \mathcal{A}(\mathcal{H}_{t}) = a) =  1,
\end{align}
which will be the action  $t = t^{*}$. Thus, there are only two outcomes in this period. Players may play $H$, $o_t = (H,H)$, which would imply $r_t = (\beta,\beta)$. Otherwise, players play $L$, $o_t = (L,L)$, leading to $r_t = (\gamma,\gamma)$. These outcomes assign the same reward to all players (call this $r_{i,t}$ ). All players will then apply the same update to their history of rewards, such that $\forall \; i, \; \; \rho_{i,t+1} = \rho_{i,t} \frown r_{t}$.\footnote{Here, $\frown$ refers to vector concatenation.} The history of action plays will similarly be updated symmetrically across action plays, with a $1$ appended to the action-play history of the played action and a $0$ is appended to the action-play history of the unplayed action. If in round $t$, histories were path-equivalent between players, then with the concatenation of the same values to all vectors in each player's history, then histories will be path-equivalent between players in round $t+1$ . Thus, the induction holds.

Now, with histories being path-equivalent across all players, we have $\mathcal{A}_i(\mathcal{H}_{i,t}) = \pi_t$, where $\pi_t$ is the same degenerate distribution across actions for each player $i$. Therefore, in any round $t > t^*$, all players will play the same action with probability $1$. The result follows. 
\end{proof}
\subsubsection{Proof of Proposition \ref{prop:deterministic}}\label{prf:deterministic}
\vspace{0.5\baselineskip}
\noindent
\begin{proof}
By Lemma \ref{lem:subsequent}, we know that when $\mathcal{H}_{1,t}$ and $\mathcal{H}_{2,t}$ are path-equivalent, players will play the same action in periods $t . . . \infty$. Now, in round $t=0$, all vectors in each agent's history are empty. Thus, for all $t\geq 0$, players will symmetrically play $H$ or $L$. This leads to outcomes such that  $o_t = \vec{H}$ implying $r_t = \vec{\beta}$ or $o_t = \vec{L}$, implying $r_t = \vec{\gamma}$ .
    
    Let period $T$ be the first period where both actions have been sampled. In the two-player case, this means $s_{T,1} > 0$ and  $s_{T,4} > 0$. Both players now  have value estimates for $H$ and $L$ that are different from their initialized values, and $V_{i,H}(s_{T}) = \beta > \gamma = V_{i,L}(s_{T})$.
    
    Now,  we know that for all $t \geq T$, $o_t = \vec{H}$ or $o_t = \vec{L}$. Thus, by Lemma 4,  for all $t \geq T, \forall \; i, \; V_{i,H}(s_t) = \beta > \gamma = V_{i,L}(s_t)$, which implies that $\pi_{i,t}^{*} = H$ for all $t \geq T$ .
\end{proof}
\subsubsection{Proof of Proposition \ref{prop:UCB}}\label{prf:UCB}
\vspace{0.5\baselineskip}
\noindent
\begin{proof}

When $t=0$, player UCB estimates are equal with $\text{UCB}_{1,H}(s_0)=\text{UCB}_{1,L}(s_0)=\text{UCB}_{2,H}(s_0)=\text{UCB}_{2,L}(s_0)=\infty$ by the algorithm construction. Thus, there are multiple maximal values. 

Observe that there are four possible outcomes in round $t=0$. The possible play paths are below:
    \begin{enumerate}
        \item $o_1 = (H,H)$. This leads to state $s_1 = (1,0,0,0)$ and $\text{UCB}_{1,H}(s_1)=\text{UCB}_{2,H}(s_1) = \beta + \sqrt{2 log(1/\delta)} < \infty$. Thus, in period $t=1$, players will choose $L$, with $\text{UCB}_{1,L}(s_1) = \text{UCB}_{2,L}(s_1) = \infty$, leading to $o_2 = (L,L)$ and $s_2 = (1,0,0,1)$.

        \item $o_1 = (L,L)$. This leads to state $s_1 = (0,0,0,1)$ and $\text{UCB}_{1,L}(s_1)=\text{UCB}_{2,L}(s_1) = \gamma + \sqrt{2 log(1/\delta)} < \infty$. Thus, in period $t=1$, players will choose $H$, with $\text{UCB}_{1,H}(s_1) = \text{UCB}_{2,H}(s_1) = \infty$, leading to $o_2 = (H,H)$ and $s_2 = (1,0,0,1)$.
        
        \item $o_1 = (H,L)$. This leads to state $s_1 = (0,1,0,0)$. Thus, in period $t=1$, Player $1$ will choose $a_{1,1} = L$ as $\text{UCB}_{1,H}(s_1) = \sqrt{2 log(1/\delta)} < \infty = \text{UCB}_{1,L}(s_1)$. Likewise, Player $2$ will observe $\text{UCB}_{2,L}(s_1) = 1 + \sqrt{2 log(1/\delta)} < \infty = \text{UCB}_{2,H}(s_1)$,leading to $a_{1,1} = H$. These choices result in $o_2 = (L,H)$ and $s_2 = (0,1,1,0)$.
        
        \item $o_1 = (L,H)$. This leads to state $s_1 = (0,0,1,0)$. Thus, in period $t=1$, Player $1$ will choose $a_{1,1} = H$ as $\text{UCB}_{1,L}(s_1) = 1 + \sqrt{2 log(1/\delta)} < \infty = \text{UCB}_{1,H}(s_1)$. Likewise, Player $2$ will observe $\text{UCB}_{2,H}(s_1) = \sqrt{2 log(1/\delta)} < \infty = \text{UCB}_{2,L}(s_1)$,leading to $a_{1,1} = L$. These choices result in $o_2 = (H,L)$ and $s_2 = (0,1,1,0)$.
    \end{enumerate}

Now, at $t = 2$, we have two possibilities of state: 

\begin{enumerate}
    \item In cases $1.$ and $2.$ above, we have $s_3 = (1,0,0,1)$. Thus, histories for each player are equal such that $\alpha_{1,H,2}\vec{\mathds{1}} = \alpha_{2,H,2}\vec{\mathds{1}} = 1$ and $\alpha_{1,L,2}\vec{\mathds{1}} = \alpha_{2,L,2}\vec{\mathds{1}} = 1$. Moreover, $\alpha_{1,H,2}\rho_{1,2} = \alpha_{2,H,2}\rho_{2,2} = \beta$ and $\alpha_{1,L,2}\rho_{1,2} = \alpha_{2,L,2}\rho_{2,2} = \gamma$. These equalities imply the path equivalence of $\mathcal{H}_{1,2}$ and $\mathcal{H}_{2,2}$. Now, with this path-equivalence, we have $\text{UCB}_{1,H}(s_3) = \text{UCB}_{2,H}(s_3) = \beta + \sqrt{2log(1/\delta)} > \gamma + \sqrt{2log(1/\delta)} = \text{UCB}_{1,L}(s_3) = UCB_2(L,s_3) $. Thus, agents no longer observe the same UCB estimates between actions.

    \item  In cases $3.$ and $4.$ above, we have $s_3 = (0,1,1,0)$. This leads to a similar path-equivalence of histories between agents. Specifically, $\alpha_{1,H,2}\vec{\mathds{1}} = \alpha_{2,H,2}\vec{\mathds{1}} = 1$ and $\alpha_{1,L,2}\vec{\mathds{1}} = \alpha_{2,L,2}\vec{\mathds{1}} = 1$. Meanwhile, $\alpha_{1,H,2}\rho_{1,2} = \alpha_{2,H,2}\rho_{2,2} = 0$ and $\alpha_{1,L,2}\rho_{1,2} = \alpha_{2,L,2}\rho_{2,2} = 1$. These equalities similarly imply the path-equivalence of $\mathcal{H}_{1,2}$ and $\mathcal{H}_{2,2}$.
    Again from this path-equivalence, we observe similar symmetric values between agents yet unequal values between UCB values of actions for each agent. $\text{UCB}_{1,H}(s_3) = \text{UCB}_{2,H}(s_3) = 0 + \sqrt{2log(1/\delta)} < 1 + \sqrt{2log(1/\delta)} = \text{UCB}_{1,L}(s_3) = \text{UCB}_{2,L}(s_3) $. 

\end{enumerate}

Now, the \textit{argmax} function returns a single action and no longer relies on tie-breaking rules. Invoking Lemma 4, we see that for all $t\geq 2$, agents will choose the same actions such that $o_t = (H,H)$ or $o_t = (L,L)$. In cases 1. and 2., for all $t \in 2 ... \infty$, we have that $V_{i,H}(s_t) = \beta$  by Lemma 2 and $V_{i,L}(s_t) = \gamma$ by Lemma 3. Thus, in these cases, bandit agents will always observe $\pi^*_i =H$ for $t \geq 2$. 

However, in cases 3. and 4., there is no longer a strict guarantee that UCB bandits will converge for all $\delta$ values given $\beta$ and $\gamma$ values. With $s_3 = (0,1,1,0)$, we have $V_{i,H}(s_3) = 0 <1 = V_{i,L}(s_3)$.  Lemmas 2 and 3 guarantee that if each action is played infinitely, they will converge to $\beta$ and $\gamma$ respectively. However, there is no guarantee that $H$ will necessarily be sampled again by agents (i.e. it is possible for all $t\geq 2$, $o_t = (L,L)$ ).  Therefore, we need some restriction on $\delta$ such that $H$ will be sampled enough by agents so that its value estimate can surpass that of $L$. This restriction takes the form of an upper bound where $\delta < e^{\frac{-\gamma^2}{2}}$, if the logarithm implemented is the natural log. We prove this next in the appendix. With this restriction, agents will play $H$ until $V_{i,H}(s_t) >V_{i,L}(s_t) > \gamma$, so that by Lemmas 2, 3, and 4, agents will always observe $\pi^*_i =H$ for $t \geq 2$.


\end{proof}
\subsubsection{Proof of upper bound on $\delta$ for cases 3. and 4. in Proposition 
\ref{prop:UCB}}\label{prf:UCB-upper-bound}
\vspace{0.5\baselineskip}
\noindent
\begin{proof}

Observe that when $s_2 = (0,1,1,0)$, we have $\text{UCB}_{i,H}( s_2) = \sqrt{2log(1/\delta)}$.  Now assume $\delta< e^{\frac{-\gamma^2}{2}}$.  By Lemma 4, we know that for all $t \geq 2$, play of $H$ corresponds to $s_{t+1,1} = s_{t,1} +1$ and play of $L$ corresponds to $s_{t+1,4} = s_{t,4} +1$. These are the only two possible movements in state. If $s_{t,4} \rightarrow \infty$, then $\text{UCB}_{i,L}(s_t)$ approaches $\gamma$. So, for $H$ to be sampled at least once more in this infinitely repeated game, $\sqrt{2log(1/\delta)} > \gamma$. Therefore, with  $\delta < e^{\frac{-\gamma^2}{2}}$, $H$ will be sampled again as $t\rightarrow \infty$ .

We now show this condition is strong enough to ensure continued sampling of $H$ until $V_{i,H}(s_t) > V_{i,L}(s_t)$.  We know any further samples of $H$ will result in the outcome $o_t = (H,H)$. Thus, at any time $t \geq 2$  $s_{t,2} = 1$ , therefore we now have the general upper confidence bound formulation of $H$ in this instance below:  
\[\text{UCB}_{i,H}( s_t) = \frac{s_{t,1}\beta}{s_{t,1} + 1} + \sqrt{\frac{2log(1/\delta)}{s_{t,1} + 1}}\]

Continued sampling of $L$ will strictly increase the value of $s_{t,4}$ and thus strictly decrease $\text{UCB}_{i,L}(s_t)$ down to $\gamma$. So, if agents reach some threshold where $V_{i,H}(s_t) > \gamma$ , then even with potentially more plays of $L$, there will eventually be some point $T$ where  $\text{UCB}_{i,H}(s_T) > V_{i,H}(s_T)>\text{UCB}_{i,L}(s_T)$ , so that $H$ will be the only action played for $t > T$. However, if $V_{i,H}(s_t) = \frac{s_{t,1}\beta}{s_{t,1} + 1} \leq \gamma$ , then this guarantee does not immediately hold. 

In this case, in order for agents to learn to collude, there must be enough plays of $H$ to increase its value such that $\frac{s_{t,1}\beta}{s_{t,1} + 1} > \gamma$.  However,  because continued play of $L$ will cause the values of $\text{UCB}_{i,L}(s_t)$ to approach $\gamma$, it is sufficient to find values of $\delta$ such that  $\text{UCB}_{i,H}(s_t) > \gamma$ at every value of $s_{t,1}$. As shown before, once $s_{t,1} > \frac{\gamma}{\beta - \gamma}$,  agents will eventually settle on colluding regardless of the value of $\delta$. So, we have the following inequality that must hold of all plays of $H$ (values of $s_{t,1}$):
\[\delta < \begin{cases}
exp(-(\gamma - \frac{s_{t,1}\beta}{s_{t,1} + 1})^2(\frac{s_{t,1}+1}{2})),&\text{if } s_{t,1} \leq \frac{\gamma}{\beta - \gamma}\\
1, & \text{otherwise}
\end{cases} .
\]

We see this bound is increasing in $s_{t,1} \geq 0$ on the domain $1>\beta>\gamma>0$. Therefore, the strictest upper bound on $\delta$ is that when $s_{t,1} = 0$. Thus, we see that this initial bound on $\delta$ that guarantees $H$ will be played a second time is sufficient to guarantee it will then be played a third time and so on. Thus, we see the upper bound on $\delta$ as $\delta < e^{\frac{-\gamma^2}{2}}$.
\end{proof}

\end{document}